\documentclass{sig-alternate-05-2015}

\pagestyle{plain}

\usepackage[T1]{fontenc}

\setlength{\emergencystretch}{1em}

\usepackage{amsmath,bm}
\usepackage{amssymb}
\usepackage{amsthm}

\usepackage{xcolor}
\usepackage{xspace}

\usepackage{subcaption}

\usepackage{enumitem}

\usepackage{graphicx}

\usepackage[binary-units=true]{siunitx}

\usepackage{algorithm}
\captionsetup[algorithm]{font=small}
\usepackage{algorithmicx}
\usepackage{algpseudocode}
\usepackage{multicol}
\setlength{\multicolsep}{2.0pt plus 2.0pt minus 1.5pt}

\usepackage{cite}
\usepackage{url}

\usepackage{times}

\newcommand{\unit}[1]{\,\mathrm{#1}}

\newtheorem{thm}{Theorem}
\newtheorem{tdef}{Definition}


\makeatletter
\def\thm@space@setup{%
  \thm@preskip=\parskip \thm@postskip=0pt
}
\makeatother

\DeclareMathOperator{\negl}{negl}

\widowpenalty10000
\clubpenalty10000

\usepackage{tikz}
\usetikzlibrary{shapes}
\newcommand*\circled[1]{\tikz[baseline=(char.base)]{
  \node[shape=circle,draw,inner sep=0.3pt] (char) {#1};}}

\makeatletter
\def\@copyrightspace{\relax}
\makeatother

\begin{document}

\newcommand{\name}{HardIDX\xspace}

\newcommand{\rnd}{\leftarrow}
\renewcommand{\N}{\mathbb{N}}

\newcommand{\tuple}[1]{\ensuremath{\big(#1\big)}}

\newcommand{\secparSym}{\lambda}
\newcommand{\secpar}{\ensuremath{\lambda}\xspace}

\newcommand{\ie}{i.e.,\ }
\newcommand{\eg}{e.g.,\ }

\newcommand{\TBD}[1]{\textcolor{red}{\newline TO BE DONE:  {#1} \newline}}
\newcommand{\todo}[1]{\textcolor{red}{[#1]}}
\renewcommand{\Pr}[1]{\mathrm{Pr}\left[#1\right]} 
\newcommand{\new}[1]{\textcolor{blue}{#1}}

\newcommand{\valueBSym}{v}
\newcommand{\valueB}{\ensuremath{\valueBSym}\xspace}
\newcommand{\valueBEl}[1]{\ensuremath{\valueBSym_{#1}}\xspace}
\newcommand{\valueBSet}{\ensuremath{\mathbf{\valueBSym}}\xspace}
\newcommand{\valueBSetResult}{\ensuremath{\mathbf{\valueBSym'}}\xspace}

\newcommand{\cipherSym}{C}
\newcommand{\cipher}			{\ensuremath{\cipherSym}\xspace}
\newcommand{\cipherEl}[1]		{\ensuremath{\cipherSym_{#1}}\xspace}
\newcommand{\cipherK}[2]		{\ensuremath{\cipherSym_{#1#2}}\xspace}
\newcommand{\cipherT}			{\ensuremath{\cipherSym_t}\xspace}
\newcommand{\cipherTEl}[1]		{\ensuremath{\cipherSym_t^{#1}}\xspace}
\newcommand{\cipherTSet}		{\ensuremath{\mathbf{\cipherSym_t}}\xspace}
\newcommand{\cipherM}			{\ensuremath{\cipherSym_{\valueB}}\xspace}
\newcommand{\cipherMEl}[1]		{\ensuremath{\cipherSym_{\valueB}^{#1}}\xspace}
\newcommand{\cipherMSet}		{\ensuremath{\mathbf{\cipherSym_{\valueB}}}\xspace}
\newcommand{\cipherSet}			{\ensuremath{\bm{\cipherSym}}\xspace}
\newcommand{\cipherSetRes}		{\ensuremath{\bm{\cipherSym'}}\xspace}

\newcommand{\skeySym}{SK}
\newcommand{\skey}{\ensuremath{\skeySym}\xspace}
\newcommand{\skeyVar}[1]{\ensuremath{\skeySym_{#1}}\xspace}
\newcommand{\skeys}{\ensuremath{SKs}\xspace}
\newcommand{\skeyT}{\ensuremath{\skeySym_{\keyB{}}}\xspace}
\newcommand{\skeyM}{\ensuremath{\skeySym_{\valueB}}\xspace}
\newcommand{\range}{\ensuremath{R}\xspace}
\newcommand{\rangeS}{\ensuremath{R_s}\xspace}
\newcommand{\rangeE}{\ensuremath{R_e}\xspace}
\newcommand{\domain}{\ensuremath{D}\xspace}
\newcommand{\token}{\ensuremath{\tau}\xspace}
\newcommand{\tokenP}{\ensuremath{\tau_{Plain}}\xspace}
\newcommand{\tokenS}{\ensuremath{\tau_s}\xspace}
\newcommand{\tokenE}{\ensuremath{\tau_e}\xspace}
\newcommand{\botSym}{\ensuremath{\bot}\xspace}


\newcommand{\BPtree}{$B^+$-tree\xspace}
\newcommand{\BPtrees}{$B^+$-trees\xspace}
\newcommand{\Btree}{$B$-tree\xspace}
\newcommand{\Btrees}{$B$-trees\xspace}

\newcommand{\maxEl}{\ensuremath{\infty}\xspace}
\newcommand{\minEl}{\ensuremath{-\maxEl}\xspace}
\newcommand{\dummyP}{\ensuremath{dummyPointer}\xspace}

\newcommand{\treeH}{\ensuremath{h}\xspace}
\newcommand{\treeSym}		{T}
\newcommand{\tree}			{\ensuremath{\treeSym}\xspace}
\newcommand{\treeEncSym}	{\gamma}
\newcommand{\treeEnc}		{\ensuremath{\treeEncSym}\xspace}
\newcommand{\treeEncSize}	{\ensuremath{\treeEncSym.{size}}\xspace}

\newcommand{\childrenSym}{ch}
\newcommand{\children}{\ensuremath{\childrenSym}\xspace}
\newcommand{\childrenAmount}{\ensuremath{\#\mathbf{\childrenSym}}\xspace}

\newcommand{\keyBSym}{\ensuremath{k}}
\newcommand{\keyB}[1]{\ensuremath{\keyBSym_{#1}}\xspace}
\newcommand{\keySet}{\ensuremath{\mathbf{\keyBSym}}\xspace}
\newcommand{\keySetResult}{\ensuremath{\mathbf{\keyBSym'}}\xspace}

\newcommand{\inputSet}{\ensuremath{\mathbf{s}}\xspace}
\newcommand{\inputEl}[1]{\ensuremath{s_{#1}}\xspace}
\newcommand{\valueBAmount}{\ensuremath{n}\xspace}
\newcommand{\fileSet}{\ensuremath{\mathbf{f}}\xspace}

\newcommand{\file}[1]{\ensuremath{f_{#1}}\xspace}

\newcommand{\branchingF}{\ensuremath{b}\xspace}

\newcommand{\counterSym}	{ctr}
\newcommand{\counter}		{\ensuremath{\counterSym}\xspace}

\newcommand{\pageSym}		{\omega}
\newcommand{\page}			{\ensuremath{\pageSym}\xspace}
\newcommand{\pageEl}[1]		{\ensuremath{\pageSym_{#1}}\xspace}
\newcommand{\pageAmount}	{\ensuremath{\#\mathbf{\pageSym}}\xspace}
\newcommand{\pageSet}		{\ensuremath{\boldsymbol{\Omega}}\xspace}

\newcommand{\nodeSym}		{x}
\newcommand{\nodeSetSym}	{X}
\newcommand{\nodeSymT}		{y}
\newcommand{\nodeSetSymT}	{Y}
\newcommand{\node}			{\ensuremath{\nodeSym}\xspace}
\newcommand{\nodeT}			{\ensuremath{\nodeSymT}\xspace}
\newcommand{\nodeEl}[1]		{\ensuremath{\nodeSym_{#1}}\xspace}
\newcommand{\nodeElT}[1]	{\ensuremath{\nodeSymT_{#1}}\xspace}
\newcommand{\nodePos}[1]	{\ensuremath{p_{\nodeSym_{#1}}}\xspace}
\newcommand{\nodeSet}		{\ensuremath{\mathbf{\nodeSetSym}}\xspace}
\newcommand{\nodeSetT}		{\ensuremath{\mathbf{\nodeSetSymT}}\xspace}
\newcommand{\nodeSize}		{\ensuremath{\nodeSym.{size}}\xspace}
\newcommand{\nodeAmount}	{\ensuremath{\#\mathbf{\nodeSym}}\xspace}
\newcommand{\nodeAmountT}	{\ensuremath{\#\mathbf{\nodeSymT}}\xspace}
\newcommand{\nodeId}		{\ensuremath{\nodeSym.id}\xspace}
\newcommand{\nodeIdEl}[1]	{\ensuremath{\nodeEl{#1}.id}\xspace}
\newcommand{\nodeIdElT}[1]	{\ensuremath{\nodeElT{#1}.id}\xspace}
\newcommand{\nodeIdElVarT}[2]	{\ensuremath{{#1}_{#2}.id}\xspace}
\newcommand{\nodeIdCh}[1]	{\ensuremath{\nodeSym.chId_{#1}}\xspace}
\newcommand{\nodeMsgHash}[1]{\ensuremath{\nodeSym.hash_{#1}}\xspace}
\newcommand{\nodeNum}[1]	{\ensuremath{\nodeSym_{#1}}\xspace}
\newcommand{\nodeKeyNumber}	{\ensuremath{(\branchingF - 1)}\xspace}
\newcommand{\nodeKeySym}	{k}
\newcommand{\nodeKey}[1]	{\ensuremath{\nodeSym.\nodeKeySym_{#1}}\xspace}
\newcommand{\nodeKeyAmount}	{\ensuremath{\nodeSym.\#\mathbf{\nodeKeySym}}\xspace}
\newcommand{\nodeKeyVar}[2]	{\ensuremath{\nodeSym_{#1}.\nodeKeySym_{#2}}\xspace}
\newcommand{\nodeKeyVarT}[3]{\ensuremath{#1_{#2}.\nodeKeySym_{#3}}\xspace}
\newcommand{\nodePointer}[1]{\ensuremath{\nodeSym.p_{#1}}\xspace}
\newcommand{\nodePointerSet}{\ensuremath{\nodeSym.\bm{p}}\xspace}
\newcommand{\nodeParent}[2]	{\ensuremath{\nodeSym_{#1} {\rightarrow} parent_{#2}}\xspace}
\newcommand{\nodeIsLeaf}	{\ensuremath{\nodeSym.isLeaf}\xspace}
\newcommand{\nodeIsLeafVar}[1]	{\ensuremath{\nodeSym_{#1}.isLeaf}\xspace}
\newcommand{\nodeIsLeafVarT}[2]	{\ensuremath{{#1}_{#2}.isLeaf}\xspace}
\newcommand{\nodeNext}		{\ensuremath{\nodeSym.next}\xspace}
\newcommand{\nodeNextVar}[1]{\ensuremath{\nodeSym_{#1}.next}\xspace}
\newcommand{\nodeNextVarT}[2]{\ensuremath{#1_{#2}.next}\xspace}

\newcommand{\pointerSym}		{p}
\newcommand{\pointer}[1]		{\ensuremath{\pointerSym_{#1}}\xspace}
\newcommand{\pointerSetSym}		{P}
\newcommand{\pointerSet}		{\ensuremath{\mathbf{\pointerSetSym}}\xspace}
\newcommand{\pointerSetEl}[1]	{\ensuremath{\mathbf{\pointerSetSym}_{#1}}\xspace}

\newcommand{\searchKey}{\ensuremath{searchKey}\xspace}
\newcommand{\searchKeyStart}{\ensuremath{searchKeyStart}\xspace}
\newcommand{\searchKeyEnd}{\ensuremath{searchKeyEnd}\xspace}

\newcommand{\HSBTVar}[1]						{\ensuremath{\mathtt{#1}}\xspace}
\newcommand{\HSBTSetupVar}[1]					{\ensuremath{\mathtt{#1\_}\-\mathtt{Setup}}\xspace}
\newcommand{\HSBTSetupParaVar}[2]				{\ensuremath{\mathtt{#1\_}\-\mathtt{Setup}(\allowbreak 1^\secparSym)}\xspace}
\newcommand{\HSBTEncVar}[1]						{\ensuremath{\mathtt{#1\_}\-\mathtt{Enc}}\xspace}
\newcommand{\HSBTEncParaVar}[2]					{\ensuremath{\mathtt{#1\_}\-\mathtt{Enc}(\allowbreak #2)}\xspace}
\newcommand{\HSBTTokVar}[1]						{\ensuremath{\mathtt{#1\_}\-\mathtt{Tok}}\xspace}
\newcommand{\HSBTTokParaVar}[2]					{\ensuremath{\mathtt{#1\_}\-\mathtt{Tok}(\allowbreak #2)}\xspace}
\newcommand{\HSBTDecVar}[1]						{\ensuremath{\mathtt{#1\_}\-\mathtt{Dec}}\xspace}
\newcommand{\HSBTDecParaVar}[2]					{\ensuremath{\mathtt{#1\_}\-\mathtt{Dec}(\allowbreak #2)}\xspace}
\newcommand{\HSBTSearchRangeVar}[1]				{\ensuremath{\mathtt{#1\_}\-\mathtt{Search}\-\mathtt{Range}}\xspace}
\newcommand{\HSBTSearchRangeParaVar}[2]			{\ensuremath{\mathtt{#1\_}\-\mathtt{Search}\-\mathtt{Range(\allowbreak #2)}}\xspace}
\newcommand{\HSBTSearchRangeTrustedVar}[1]		{\ensuremath{\mathtt{#1\_}\-\mathtt{Search}\-\mathtt{Range\_}\-\mathtt{Trusted}}\xspace}
\newcommand{\HSBTSearchRangeTrustedParVar}[2]	{\ensuremath{\mathtt{#1\_}\-\mathtt{Search}\-\mathtt{Range\_}\-\mathtt{Trusted(\allowbreak #2)}}\xspace}

\newcommand{\HSBTSym}						{HSBT}
\newcommand{\HSBT}							{\HSBTVar 						{\HSBTSym}}
\newcommand{\HSBTSetup}						{\HSBTSetupVar					{\HSBTSym}}
\newcommand{\HSBTSetupPara}					{\HSBTSetupParaVar				{\HSBTSym}{1^\secparSym}}
\newcommand{\HSBTEnc}						{\HSBTEncVar					{\HSBTSym}}
\newcommand{\HSBTEncPara}					{\HSBTEncParaVar				{\HSBTSym}{\skey, \inputSet}}
\newcommand{\HSBTTok}						{\HSBTTokVar					{\HSBTSym}}
\newcommand{\HSBTTokPara}					{\HSBTTokParaVar				{\HSBTSym}{\skey, \range}}
\newcommand{\HSBTDec}						{\HSBTDecVar					{\HSBTSym}}
\newcommand{\HSBTDecPara}					{\HSBTDecParaVar				{\HSBTSym}{\skey, \cipherSetRes}}
\newcommand{\HSBTSearchRange}				{\HSBTSearchRangeVar			{\HSBTSym}}
\newcommand{\HSBTSearchRangePara}			{\HSBTSearchRangeParaVar		{\HSBTSym}{\token}}
\newcommand{\HSBTSearchRangeTrusted}		{\HSBTSearchRangeTrustedVar		{\HSBTSym}}
\newcommand{\HSBTSearchRangeTrustedPara}	{\HSBTSearchRangeTrustedParVar	{\HSBTSym}{\token}}

\newcommand{\HSBTOSym}						{\HSBTSym1}
\newcommand{\HSBTO}							{\HSBTVar 						{\HSBTOSym}}
\newcommand{\HSBTOSetup}					{\HSBTSetupVar					{\HSBTOSym}}
\newcommand{\HSBTOSetupPara}				{\HSBTSetupParaVar				{\HSBTOSym}{1^\secparSym}}
\newcommand{\HSBTOEnc}						{\HSBTEncVar					{\HSBTOSym}}
\newcommand{\HSBTOEncPara}					{\HSBTEncParaVar				{\HSBTOSym}{\skey, \inputSet}}
\newcommand{\HSBTOTok}						{\HSBTTokVar					{\HSBTOSym}}
\newcommand{\HSBTOTokPara}					{\HSBTTokParaVar				{\HSBTOSym}{\skeyT, \range}}
\newcommand{\HSBTODec}						{\HSBTDecVar					{\HSBTOSym}}
\newcommand{\HSBTODecPara}					{\HSBTDecParaVar				{\HSBTOSym}{\skeyM, \cipherSetRes}}
\newcommand{\HSBTOSearchRange}				{\HSBTSearchRangeVar			{\HSBTOSym}}
\newcommand{\HSBTOSearchRangePara}			{\HSBTSearchRangeParaVar		{\HSBTOSym}{\token}}
\newcommand{\HSBTOSearchRangeTrusted}		{\HSBTSearchRangeTrustedVar		{\HSBTOSym}}
\newcommand{\HSBTOSearchRangeTrustedPara}	{\HSBTSearchRangeTrustedParVar	{\HSBTOSym}{\token}}

\newcommand{\HSBTTSym}						{\HSBTSym2}
\newcommand{\HSBTT}							{\HSBTVar 						{\HSBTTSym}}
\newcommand{\HSBTTSetup}					{\HSBTSetupVar					{\HSBTTSym}}
\newcommand{\HSBTTSetupPara}				{\HSBTSetupParaVar				{\HSBTTSym}{1^\secparSym}}
\newcommand{\HSBTTEnc}						{\HSBTEncVar					{\HSBTTSym}}
\newcommand{\HSBTTEncPara}					{\HSBTEncParaVar				{\HSBTTSym}{\skey, \inputSet}}
\newcommand{\HSBTTTok}						{\HSBTTokVar					{\HSBTTSym}}
\newcommand{\HSBTTTokPara}					{\HSBTTokParaVar				{\HSBTTSym}{\skeyT, \range}}
\newcommand{\HSBTTDec}						{\HSBTDecVar					{\HSBTTSym}}
\newcommand{\HSBTTDecPara}					{\HSBTDecParaVar				{\HSBTTSym}{\skeyM, \cipherSetRes}}
\newcommand{\HSBTTSearchRange}				{\HSBTSearchRangeVar			{\HSBTTSym}}
\newcommand{\HSBTTSearchRangePara}			{\HSBTSearchRangeParaVar		{\HSBTTSym}{\token, \treeEnc}}
\newcommand{\HSBTTSearchRangeTrusted}		{\HSBTSearchRangeTrustedVar		{\HSBTTSym}}
\newcommand{\HSBTTSearchRangeTrustedPara}	{\HSBTSearchRangeTrustedParVar	{\HSBTTSym}{\token, \nodeSet}}

\newcommand{\ranPermSym}		{\Pi}
\newcommand{\ranPerm}			{\ensuremath{\ranPermSym}\xspace}
\newcommand{\ranPermDef}		{\ensuremath{\ranPermSym: \{0,1\}^\secparSym \times \{0,1\}^{log_2 \nodeAmount} \allowbreak \rightarrow \allowbreak \{0,1\}^{log_2 \nodeAmount}}\xspace}
\newcommand{\ranPermPara}[2]	{\ensuremath{\ranPermSym(#1,#2)}}

\newcommand{\reservedSpace}{\ensuremath{\mathtt{reservedSpace}}\xspace}
\newcommand{\maxAmount}{\ensuremath{\mathtt{maxAmount}}\xspace}
\newcommand{\timePar}[1]{\ensuremath{t_{#1}}\xspace}

\newcommand{\defScheme}{\ensuremath{\mathcal{D}}\xspace}
\newcommand{\simulator}{\ensuremath{\mathcal{S}}\xspace}
\newcommand{\adversary}{\ensuremath{\mathcal{A}}\xspace}
\newcommand{\bitB}{\ensuremath{\texttt{b}}\xspace}

\newcommand{\real}{\ensuremath{\mathbf{Real}_{\adversary}(\secparSym)}\xspace}
\newcommand{\ideal}{\ensuremath{\mathbf{Ideal}_{\adversary, \simulator}(\secparSym)}\xspace}
\newcommand{\leakage}{\mathcal{L}}
\newcommand{\leakESym}{\leakage_{enc}}
\newcommand{\leakHSym}{\leakage_{hw}}
\newcommand{\leakE}{\ensuremath{\leakESym}\xspace}
\newcommand{\leakH}{\ensuremath{\leakHSym}\xspace}
\newcommand{\leakEPara}{\ensuremath{\leakESym(\inputSet)}\xspace}
\newcommand{\leakHPara}{\ensuremath{\leakHSym(\inputSet, \tree, \range, \timePar{})}\xspace}

\newcommand{\treeAccessPatSym}	{\mathcal{X}}
\newcommand{\treeAccessPat}		{\ensuremath{\treeAccessPatSym}\xspace}
\newcommand{\treeAccessPatPara}	{\ensuremath{\treeAccessPatSym(\inputSet, \tree, \range, \timePar{})}\xspace}
\newcommand{\valueBAccessPatSym}	{\Delta}
\newcommand{\valueBAccessPat}		{\ensuremath{\valueBAccessPatSym}\xspace}
\newcommand{\valueBAccessPatPara}	{\ensuremath{\valueBAccessPatSym(\inputSet, \tree, \range, \timePar{})}\xspace}
\newcommand{\pageAccessPatSym}	{\mathcal{P}}
\newcommand{\pageAccessPat}		{\ensuremath{\pageAccessPatSym}\xspace}
\newcommand{\pageAccessPatPara}	{\ensuremath{\pageAccessPatSym(\inputSet, \tree, \range, \timePar{})}\xspace}

\newcommand{\stateDictSym}			{state}
\newcommand{\stateDict}				{\ensuremath{\stateDictSym}\xspace}
\newcommand{\stateDictParaVar}[1]	{\ensuremath{\stateDictSym[#1]}\xspace}

\newcommand{\randomSym}				{r}
\newcommand{\random}				{\ensuremath{\randomSym}\xspace}
\newcommand{\randomEl}[1]			{\ensuremath{\randomSym_{#1}}\xspace}

\newcommand{\SKR}					{\ensuremath{\widetilde{\skey}}\xspace}

\newcommand{\wedgeB}	{\, \wedge \,}
\newcommand{\inB}		{\, {\in} \,}

\newcommand{\PSESym}{PASE}
\newcommand{\PSE}{\ensuremath{\mathtt{\PSESym}}\xspace}
\newcommand{\PSEGen}{\ensuremath{\mathtt{\PSESym\_}\-\mathtt{Gen}}\xspace}
\newcommand{\PSEEnc}{\ensuremath{\mathtt{\PSESym\_}\-\mathtt{Enc}}\xspace}
\newcommand{\PSEDec}{\ensuremath{\mathtt{\PSESym\_}\-\mathtt{Dec}}\xspace}
\newcommand{\PSEGenPara}{\ensuremath{\mathtt{\PSESym\_}\-\mathtt{Gen}(\allowbreak 1^\secparSym)}\xspace}
\newcommand{\PSEEncPara}{\ensuremath{\mathtt{\PSESym\_}\-\mathtt{Enc}(\allowbreak \skey, \valueB)}\xspace}
\newcommand{\PSEDecPara}{\ensuremath{\mathtt{\PSESym\_}\-\mathtt{Dec}(\allowbreak \skey, \cipher)}\xspace}
\newcommand{\PSEEncParaVar}[2]{\ensuremath{\mathtt{\PSESym\_}\-\mathtt{Enc}(\allowbreak #1, #2)}\xspace}
\newcommand{\PSEDecParaVar}[2]{\ensuremath{\mathtt{\PSESym\_}\-\mathtt{Dec}(\allowbreak #1, #2)}\xspace}

	\title{\name: Practical and Secure Index with SGX}

	\numberofauthors{6} %
	\author{
	\alignauthor 
	Benny Fuhry\\
	       \affaddr{SAP Research}\\
	       \email{\footnotesize{benny.fuhry@sap.com}}\\
	\alignauthor
	Raad Bahmani\\
	       \affaddr{Technische Universit\"at Darmstadt}\\
	       \email{\footnotesize{r.bahmani@trust.tu-darmstadt.de}}
	\alignauthor
	Ferdinand Brasser\\
	       \affaddr{Technische Universit\"at Darmstadt}\\
	       \email{\footnotesize{f.brasser@trust.tu-darmstadt.de}}	
	\and  
	\alignauthor 
	Florian Hahn\\
	       \affaddr{SAP Research}\\
	       \email{\footnotesize{florian.hahn@sap.com}}
	\alignauthor
	Florian Kerschbaum\\
	       \affaddr{University of Waterloo}\\
	       \email{\footnotesize{florian.kerschbaum@uwaterloo.ca}}
	\alignauthor
	Ahmad-Reza Sadeghi\\
	       \affaddr{Technische Universit\"at Darmstadt}\\
	       \email{\footnotesize{a.sadeghi@trust.tu-darmstadt.de}}
	}

	\maketitle


\begin{abstract}
Software-based approaches for search over encrypted data are still either challenged by lack of proper, low-leakage encryption or slow performance.
Existing hardware-based approaches do not scale well due to hardware limitations and software designs that are not specifically tailored to the hardware architecture, and are rarely well analyzed for their security (\eg the impact of side channels). 
Additionally, existing hardware-based solutions often have a large code footprint in the trusted environment susceptible to software compromises.
In this paper we present \name: a hardware-based approach, leveraging Intel's SGX, for search over encrypted data. 
It implements only the security critical core, i.e., the search functionality, in the trusted environment and resorts to untrusted software for the remainder.
\name is deployable as a highly performant encrypted database index: it is logarithmic in the size of the index and searches are performed within a few milliseconds rather than seconds.
We formally model and prove the security of our scheme showing that its leakage is equivalent to the best known searchable encryption schemes.
Our implementation has a very small code and memory footprint yet still scales to virtually unlimited search index sizes, i.e., size is limited only by the general -- non-secure -- hardware resources.
\end{abstract}


\section{Introduction}
\label{sec:intro}
	
Outsourcing the storage and processing of sensitive data to untrusted cloud environment is still considered as too risky due to possible data leakage, government intrusion, and legal liability. %
Cryptographic solutions such as Secure Multiparty Computation (MPC) \cite{BenGol88,GolMic87}
and in particular Fully Homomorphic Encryption (FHE) \cite{gentry_fully_2009} offer high degree of protection by allowing arbitrary computation on encrypted data.  
However, MPC and FHE schemes are still impractical for adoption in large distributed systems~\cite{coron_fully_2011, gentry_homomorphic_2012}. 

Moreover, there are a number of useful applications that only require a small set of operations rather than the universal solutions offered by MPC/FHE.
A prime example of such operations is the search and retrieval in databases without the need to download all data from the cloud.
For searching over encrypted data, different cryptographic schemes have been proposed such as property-preserving encryption \cite{bellare_deterministic_2007,boldyreva_ope_2009}, or functional encryption \cite{boneh_functional_2011} and its special case searchable encryption \cite{song_practical_2000,curtmola_searchable_2006,lu_privacy-preserving_2012,KamMoa16}. 
In this context, performing efficient and secure \emph{range} queries are commonly considered to be very challenging.
CryptDB~\cite{popa_cryptdb:_2011} resorts to order-preserving encryption for this purpose which is susceptible to simple ciphertext-only attacks as shown by Naveed et al.~\cite{naveed_inference_2015}.

As we will elaborate in our related work section, many schemes for search over encrypted data supporting range queries require search time linear in the number of database records.
Recently, schemes with polylogarithmic search time, based on an index structure, have been proposed~\cite{demertzis_2016,faber2015rich,lu_privacy-preserving_2012}.
Nonetheless, the first scheme in \cite{lu_privacy-preserving_2012} is not yet practical, because it applies pairing-based cryptography and also leaks sensitive information about the plaintext, namely the order of the plaintexts. 
In contrast, \cite{demertzis_2016} and \cite{faber2015rich} presented approaches with polylogarithmic search time that utilize only lightweight cryptography, that is, pseudorandom functions and symmetric encryption.
Out of the many schemes presented in ~\cite{demertzis_2016}, the most secure approach, without false positives and bearable storage cost achieves practical deployability.
However, it still leaks much sensitive information, \eg the search pattern and the range size of each query. 
Hence, designing an \emph{efficient} searchable encryption scheme with \emph{minimal leakage on the queried ranges} remains an open challenge.

Another line of research~\cite{bajaj_trusteddb:_2014,baumann_shielding_2014} leverages the developments in hardware-assisted Trusted Execution Environments (TEEs) for search over encrypted data. 
Although Intel's recently introduced Software Guard Extension (SGX)~\cite{Intel_SGX3,Intel_SGX2,Intel_SGX1,SGX_Ref,iscaTut,costanintel} has inspired new interest in TEEs, related technologies have been available before, \eg in ARM processors known as ARM TrustZone~\cite{trustzone} as well as in academic research~\cite{SPP10,KSS+14,BEK+15}.
Also, AMD has recently announced a TEE for their CPUs~\cite{amd} rising the hope that TEEs will be widely available in x86 processors, and thus in many relevant environments such as clouds, in the near future.

TEEs provide the secure isolation of sensitive data and computation in hostile environments. 
For instance, SGX loads application code and data from main memory into an isolated memory, which assures confidentiality and integrity of its data against malicious software on the same system, including privileged software like operating system (OS) or hypervisor.
These properties are only guaranteed as long as the TEE operates completely self-contained. 
However, TEEs have to interact with untrustworthy components within the same computer system for various reasons.
In particular, TEEs often have limited resources and need to utilize untrusted resources of their host system. 
For instance, in case of limited memory it swaps out data to untrusted memory or storage.
Hence, in order to achieve comprehensive security, information leakage through those channels has to be considered and taken care of. 
Previous solutions for isolating databases with SGX load and execute the entire unmodified database management system (DBMS) into an enclave~\cite{bajaj_trusteddb:_2014,baumann_shielding_2014}, but they do not consider possible side channels occurring from the use of untrusted resources as well as resource sharing between TEE and untrusted host. 
In particular, previous works do not formally consider information leakage.
Additionally, they do not scale well due to the limited memory size of SGX's enclaves and the large footprint of the code they require in the TEE.

\textbf{Our goal and contribution.} In this paper, we present an efficient scheme for search over encrypted data using SGX that can directly be deployed as a database index.
We utilize SGX's protection characteristics to achieve an outstanding tradeoff between security, performance and functionality.
The currently fastest software-based schemes that support range queries are~\cite{demertzis_2016} and ~\cite{faber2015rich}.
Our solution significantly improves over these approaches in terms of performance and storage.
Compared to the latest hardware-based schemes~\cite{bajaj_trusteddb:_2014,baumann_shielding_2014}, we improve in terms of security and scalability.
Our scheme organizes data in a \BPtree~\cite{bayer_organization_1970} structure that is frequently used for databases indexes in most database management systems (DMBSs)~\cite{_mysql_????,RamakrishnanGehrke200208}, for data access \cite{_power_????} as well as for file systems \cite{giampaolo_practical_1998,270_building_????} to vastly improve the performance of search operations.
Our solution support searches for single values and value ranges.
Additionally, our solution can be adapted easily to many other database (search) operations.

We provide two constructions of our search scheme differing in the management of the \BPtree.
The tree is either loaded and handled as a whole inside the enclave, or is loaded in an on-demand fashion (see Section~\ref{sec:design}).
Although the tree does not directly leak information about the content of the (encrypted) index, it can leak access patterns information observed during query processing.
For both constructions we show the access pattern leakage with respect to side channels observable by an attacker.
Furthermore, we show that the leakage of both construction is almost the same differing only by their granularity.

We implemented and extensively evaluated both constructions on SGX-enabled hardware (see Section~\ref{sec:impl} and Section~\ref{sec:eval}). 
The first construction loads the whole \BPtree structure in an SGX enclave and performs search queries thereafter.
However, it is not possible to load arbitrarily large structures into an enclave, as enclave size is limited.
We will detail this in Section~\ref{sec:impl}.
Our second construction only loads those data in SGX that is currently needed to process a search query.
It has a very small code and memory footprint in the TEE compared to other hardware-based approaches \cite{bajaj_trusteddb:_2014,baumann_shielding_2014}.
Additionally, it scales to arbitrary index sizes, as memory usage in the enclave is constant and untrusted resources are used to store the database itself.

\noindent 
Our main contributions are as follows: 
\begin{itemize}[leftmargin=0.4cm, itemsep=-2pt,topsep=0pt]	
	\item Our scheme has logarithmic complexity in the size of index and searches are performed within a few milliseconds.
	\item We formally model and prove our scheme secure showing that its security (leakage) is comparable to the best known searchable encryption schemes.
	\item We provide an implementation and evaluate the performance and functional bottleneck of SGX on the basis of two different constructions that are designed specifically for SGX to reduce the Trusted Computing Base.
\end{itemize}
	

\section{Background}
\label{sec:background}

\begin{figure*}[h]
\centering
\begin{subfigure}{.64\textwidth}
  \centering
  \includegraphics[width=\textwidth]{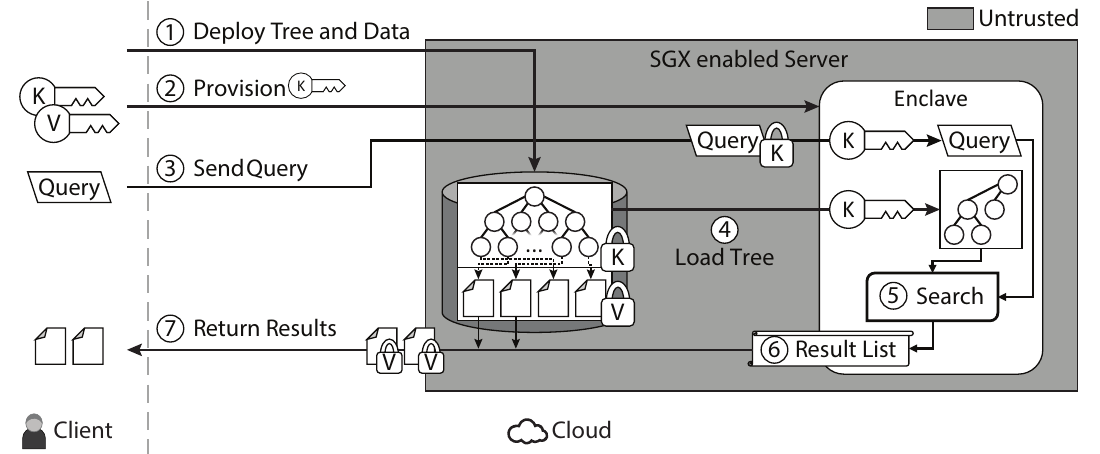}
  \caption{High level design}
  \label{fig:overview}
\end{subfigure}\hfil%
\begin{subfigure}{.3\textwidth}
  \centering
  \includegraphics[width=\textwidth]{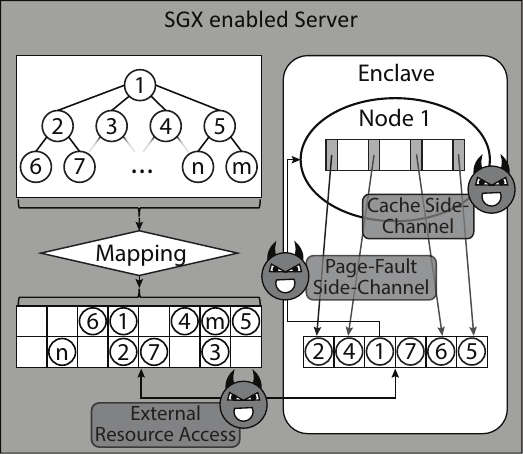}
  \caption{Attack channels}
  \label{fig:attacker}
\end{subfigure}
\caption{System model}
\label{fig:model}
\end{figure*}

We will start by briefly introducing SGX towards which we design \name.
Note, our solution is not strictly limited to SGX and could be used with any other system having a TEE with capabilities similar to SGX (e.g., Sanctum~\cite{sanctum}). 
Afterwards we explain the basics of side channel attacks, which are a major concern for \name.

\subsection{Intel Software Guard Extensions (SGX)}
\label{sec:overview:subsec:sgx}
SGX is an extension of the x86 instruction set architecture (ISA) 
introduced with the 6th Generation Intel Core processors (code name Skylake).
We now present a high level overview of SGX's features utilized by \name (see \cite{Intel_SGX3,Intel_SGX2,Intel_SGX1,SGX_Ref,iscaTut,costanintel} for more details).

\textbf{Memory Isolation.} 
On SGX enabled platforms, programs can be divided into two parts, an \emph{untrusted part} and an isolated, \emph{trusted part}.
The trusted part, called \emph{enclave} in SGX terminology, is located in a dedicated portion of the physical RAM. 
The SGX hardware enforces additional protection on this part of the memory.
In particular, all other software on the system, including privileged software like OS, hypervisor, firmware and code in system management mode (SMM) cannot access the enclave memory. 

The untrusted part is executed as an ordinary process within the virtual memory address space and the enclave memory is mapped into the virtual memory of the untrusted \emph{host process}.
This mapping allows the enclave to access the entire virtual memory of its host process, while the (untrusted) host process can invoke the enclave only through a well-defined interface.
Furthermore, all isolated code and data is encrypted while residing outside of the CPU package.
Decryption and integrity checks are performed when the data is loaded inside the CPU.

\textbf{Memory Management.} 
SGX dedicates a fixed amount of the system's main memory (RAM) for enclaves and related metadata.
For current systems this memory is limited to \SI{128}{\mega\byte} which is used for both, SGX metadata and the memory for the enclaves themselves. 
The latter is called Enclave Page cache (EPC) and is about \SI{96}{\mega\byte} according to our evaluations.
The SGX memory is reserved in the early boot phase and is static throughout the runtime of the system.
As the number of enclaves which may be loaded and executed in parallel is virtually unlimited, the OS has to manage the enclave memory dynamically.
The OS can allocate (parts of) the memory to individual enclaves and change these allocation during the runtime of the enclaves. 
In particular, the OS can swap out enclave pages. 
SGX ensures integrity, confidentiality and freshness of swapped-out pages.

\textbf{Attestation.} 
SGX has a remote attestation feature which allows to verify the correct creation of an enclave on a remote system.
During enclave creation the initial code and data loaded into the enclave are measured.
This measurement can be provided to an external party to prove the correct creation of an enclave.
The authenticity of the measurement as well as the fact that the measurement originates from a benign enclave is ensured by a signature, provided by SGX's attestation feature (refer to~\cite{Intel_SGX3} for details). 
This signature is provided by a component of SGX, called quoting enclave (QE).
The QE accepts only measurements from the hardware and the hardware ensures that only correct enclaves are measured.
Furthermore, the remote attestation feature allows for establishing a secure channel between an external party and an enclave.

\subsection{Side Channel Attacks}
\label{sec:background:subsec:sidechannel}

Side channel attacks allow an adversary to extract sensitive information without having direct access to the information source by observing effects of the processing of the sensitive information. 
Side channel attacks have been known for a long time and various variants have been studied in the past, including hardware side channels like ground potential, EM or power consumption~\cite{Genkin2015}; software timing side channels that can be observed remotely~\cite{Brumley2011} as well as locally, \eg through cache timing side channels~\cite{YF14,LYG+15,OST06}.
However, all these side channel attacks are noisy and require repeated execution and measurements to extract the sensitive information.

In the context of SGX there exist a new class of side channels, called deterministic side channel~\cite{xu_controlled-channel_2015}.
As the OS is untrusted, yet still manages the enclave's resources, including enclave memory, it can observe the enclaves behavior.
In particular, the OS can generate a precise trace of the enclave's code and data accesses at the granularity of pages.
In~\cite{xu_controlled-channel_2015} it is shown that this allows to extract sensitive information from an SGX enclave.

\section{High Level Design}
\label{sec:highLevelDesign}

Now, we will give a high level description of our design and describe the general working of our scheme. 
Afterwards we explain our attacker model.

\subsection{\name Overview}
\label{sec:highLevelDesign:subsec:overview}

The high level design of our solution is shown in Figure~\ref{fig:overview}. 
The design involves three entities: the client (who is the data owner and therefore trusted), the untrusted SGX enabled server and the trusted SGX enclave \emph{within} the server.

Initially, a client prepares its data values (abbreviate with values in the following) by augmenting it with (index) search keys.
The values are stored at pseudo-random position.
The keys are then inserted into a \BPtree and the storage order of all nodes is also pseudo-random.
The tree and values are linked by adding pointers to the leaves of the tree identifying the random position of the corresponding values.
A value can be any data such as records in a relational database or files/documents in other database types.
The client then encrypts all nodes of the tree with a secret key \skeyT and all data values with \skeyM.
Then, it deploys the encrypted \BPtree and encrypted values on the untrusted server in the cloud (see step~\circled{1} in Figure~\ref{fig:overview}\footnote{For visualization purposes, the tree nodes and values are shown to be encrypted as a block. 
In reality each node and value is encrypted individually.}).

In the second step, a secure connection is established between the client and the enclave.
The client uses the SGX attestation feature for authenticating the enclave (see Section~\ref{sec:impl} for details).
Through this secure connection, the client provisions \skeyT into the enclave (see step~\circled{2}).
This step completes the setup of our scheme, which needs to be executed only once.
Even when the enclave is unloaded, e.g., due to a reboot, the state (including \skeyT) can be securely stored and restored from local memory~\cite{Intel_SGX3}.

From now on the client can send (index) search queries to the server.
Randomized encryption with the key \skeyT is used for all search queries.
Hence, the untrusted server cannot learn anything about the query, not even if the same query was send before.
When a query arrives in the enclave, \skeyT is used to decrypt the query (see step~\circled{3}).

In step~\circled{4}, the enclave loads the \BPtree structure (tree nodes, but no values) from the untrusted storage into enclave memory and decrypts it.
Given sufficient memory, the entire tree is loaded into the enclave and the search is performed afterwards (see step~\circled{5}).

However, as the tree size can exceed the memory available inside the enclave we provide a second design.
In this case, only a subset of tree nodes is loaded into the enclave.
The tree is traversed starting from the root node. 
When the search reaches an edge to a node, which is currently not present in the enclave, it is fetched from the untrusted storage. 

In both cases the search algorithm eventually reaches a set of leaf nodes, which holds pointers to data values matching the query.
This list of pointers, representing the search result, is passed to the untrusted part (see step~\circled{6}).
The untrusted part learns nothing, except for the cardinality of the result set, from this interaction, because the values are stored in a randomized order.

The result of the index search can be processed further, e.g. in combination with additional SQL operators, in the SGX enclave at the server.
In order to complete the end-to-end secure search we assume that the server uses the pointers, in step~\circled{7}, to fetch the encrypted values from untrusted storage and sends them to the client.
The client uses \skeyM to decrypt the received files.

Notably, the plaintext data values are never available on the server.
They are encrypted with strong standard cryptography methods\footnote{Our implementation uses AES-128 in GCM mode} and never decrypted on the server, not even inside the SGX enclave. 
The key to decrypt the data values is only know to the client.

\subsection{Assumptions and Attacker Model}
\label{sec:highLevelDesign:subsec:attackModel}

We assume a system which provides SGX (or any TEE with capabilities similar to SGX).
In particular, code and data inside the TEE are protected with respect to their integrity and confidentiality. 
We further assume that the TEE provides means to establish a secure channel to the client which allows secure communication and secure provisioning.

Our attacker model is illustrated in Figure~\ref{fig:attacker}.
Due to SGX's protection, the attacker cannot directly access the enclave.
However, side channels exist through, which the attacker could potentially extract sensitive information.
In particular, the attacker aims to learn the structure of the \BPtree which represents the order relation between the indexed data.

We assume the attacker has full control over all software on the system running \name.
(1) The attacker can observe all interaction of the enclave with resources outside the enclave.
In particular, the attacker can observe the access pattern to \BPtree nodes stored outside the enclave.
(2) The attacker can use deterministic page-fault side channel to observe data access inside the enclave at page granularity~\cite{xu_controlled-channel_2015}.
Through this side channel, the attacker can observe access patterns on the \BPtree stored \emph{inside} the enclave.
(3) The attacker can use cache side channel to learn about code paths or data access patterns inside the enclave, as SGX does not protect against them~\cite{costanintel}.

Hardware attacks (in particular hardware / physical side channels) are out of scope in this paper.
Furthermore, we consider denial of service (DoS) attacks out of scope, as the untrusted server could always refuse to serve the enclave. 
Similarly, the network connection between the client and the server is vulnerable to DoS attacks.


\section{Notation and definitions}
\label{sec:notation}

	\subsection{\BPtree}

	A \BPtree is a balanced, n-ary search tree.
	So called search keys are utilized to index values.	
	For instance, unique staff ids are used to find the corresponding database record (see Figure~\ref{fig:bPlusTree}).
	An index could also be built for other information, \eg the salary, and a \BPtree is able to retrieve all records below or above a limit or between limits.
	We abbreviate the search keys as keys throughout this paper and clearly differentiate all other key types (\eg cryptographic keys) if ambiguous.

	\begin{figure}[hbt]
		\centering
		\includegraphics[width=\linewidth]{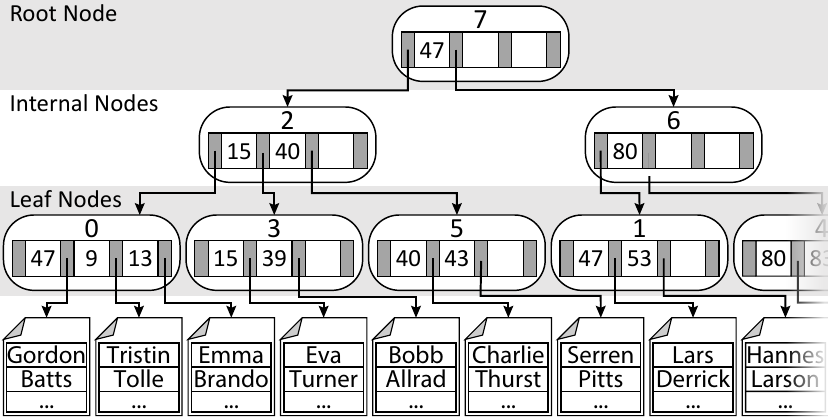}
		\vspace*{-1.6em}
		\caption{\BPtree example (random storage position on the top): the unique staff ids (inside the rectangles) are used as keys and the values are the staff records. The records contain surnames, last names and more attributes.}

		\label{fig:bPlusTree}
	\end{figure}

	Three node types are differentiated in a \BPtree : the root node, internal nodes and leaf nodes.
	Every node \node contains \nodeKeyAmount keys that are stored in a nondecreasing order: $\nodeKey{1} \leq ... \leq \nodeKey{\nodeKeyAmount}$.
	At every inner node \node (including the root if not the only node), the keys separate the key domain into $(\nodeKeyAmount + 1)$ subtrees that are reachable by $(\nodeKeyAmount + 1)$ child pointers: $\{\nodePointer{0}, \allowbreak ..., \allowbreak \nodePointer{(\nodeKeyAmount)}\} = \nodePointerSet$.
	Every key \nodeKey{i} has a corresponding pointer \nodePointer{i} that points to a node containing elements greater than or equal to \nodeKey{i} and smaller than any other tag $\nodeKey{j}$ $\forall j \in [i + 1,\nodeKeyAmount]$.
	\nodePointer{0} points to a node containing only keys, which are smaller than \nodeKey{1}.
	No internal node is linked to a value.
	Instead, every leaf node \node stores \nodeKeyAmount keys and a pointer to its corresponding value (\nodePointer{0} is not used at the leaves).
	We denote the \BPtree without the values as \emph{\BPtree structure}.
	Every node \node in the tree has a unique id \nodeId{}, a flag stored in \nodeIsLeaf if it is a leaf and we denote \nodePos{i} as the storage position of \nodeEl{i}, \ie the physical memory address.

	We use unchained $B^+$-trees, \ie the leafs are not connected.
	Linked leaves would increase the search performance, but it would severely deteriorate the security.
	The reason is that a range query would directly leak the relationship among leaves if links are followed during a query.

	In our constructions, it is not necessary to define the key domain in advance as in many other approaches.
	It is not even necessary that the domain is a range of integers.
	Instead, \domain can be an arbitrary domain with a defined order relation and a defined minimal and a maximal element that can be recognized by the algorithms.
	These two elements, denoted as \minEl and \maxEl, fulfill the following condition: $\minEl < \nodeKey{} < \maxEl$ $\forall \nodeKey{} \in \domain$.	

	The branching factor \branchingF specifies a \BPtree by defining the maximal number of pointers.
	\branchingF also defines the minimal number of pointer for the different node types, but we do not further elaborate on details.
	For ease of exposition, we assume that every key and pointer fits in an \SI{32}{\bit} block, but this is no prerequisite for our constructions.

	\subsection{Probabilistic symmetric encryption:}
	A probabilistic authenticated symmetric encryption consists of three probabilistic poly\-nomial-time algorithms $\PSE=\tuple{\PSEGenPara, \allowbreak  \PSEEncPara, \allowbreak \PSEDecPara}$. 
	It provides confidentiality, integrity and authenticity.
	\PSEGen\xspace takes a security parameter $1^\secpar$ as input and generates a secret key \skey. 
	\PSEEnc takes the key \skey and a value \valueB as input and returns a ciphertext \cipher. 
	\PSEDec takes the key \skey and a ciphertext \cipher as input and returns \keyB{} iff \keyB{} was encrypted with \PSEEnc under the key \skey.
	Otherwise, it returns \botSym. 
	\PSE has to be an authenticated IND-CCA secure encryption, \eg AES-128 in GCM mode.

	\subsection{Hardware Secured \BPtree (\HSBT)}	

	Based on the presented definition of a \BPtree, we define the notion of a Hardware Secured \BPtree as follows.
	We assume that the \BPtree should store a set \inputSet of \valueBAmount key-value pairs: $\inputSet = ((\keyB{1}, \valueBEl{1}), \allowbreak ..., \allowbreak (\keyB{n}, \valueBEl{n}))$. This set consists of \valueBAmount values $\valueBSet = (\valueBEl{1}, ..., \valueBEl{n})$ and their corresponding keys $\keySet = (\keyB{1}, ... \keyB{n})$. 

	\begin{tdef}[\HSBT]\label{thm:shbt} 
		A secure hardware $B^+$-tree scheme is a tuple of six polynomial-time algorithms \tuple{\HSBTSetup, \allowbreak \HSBTEnc, \allowbreak \HSBTTok, \allowbreak \HSBTDec, \allowbreak \HSBTSearchRange, \allowbreak \HSBTSearchRangeTrusted}.
		\vspace*{0.3em}

		\noindent Algorithms executed at the client:
		\begin{description}[style=unboxed,leftmargin=.3cm, itemsep=-2pt,topsep=0pt]
			\item[$\skey \rnd \HSBTSetupPara$:] 
				Take the security parameter \secpar as input and output a secret key \skey.
			\item[$\treeEnc \rnd \HSBTEncPara$:] 
				Take the secret key \skey and a set \inputSet of key-value pairs as input. Output an encrypted $B^+$-tree \treeEnc.
			\item[$\token \rnd \HSBTTokPara$:] 
				Take the secret key \skey and a range $\range = [\rangeS, \rangeE]$ as input. 
				Output a search token \token.	
			\item[$\valueBSetResult \rnd \HSBTDecPara$:] 
				Take the secret key \skey and a set of ciphertext \cipherSetRes as input. 
				Decrypt the ciphertexts and outputs plaintext values \valueBSetResult.	
		\end{description}
		\vspace*{0.3em}

		\noindent Executed at the server on untrusted hardware:
		\begin{description}[style=unboxed,leftmargin=.3cm, itemsep=-2pt,topsep=0pt]
			\item[$\cipherSetRes \rnd \HSBTSearchRangePara$:] 
				Take a search token \token as input and call the secure hardware function \HSBTSearchRangeTrusted.
				Output a set of encrypted values \cipherSetRes.
		\end{description}
		\vspace*{0.3em}

		\noindent Executed at the server on secure hardware:
		\begin{description}[style=unboxed,leftmargin=.3cm, itemsep=-2pt,topsep=0pt]
			\item[$\pointerSet \rnd \HSBTSearchRangeTrustedPara$:] 
				Take a search token \token as input.
				Output a set of pointers \pointerSet.
		\end{description}
	\end{tdef}

	In the following definitions and constructions, we assume a passive attacker.
	This is important, as especially the correctness can easily be thwarted by an active attacker.
	For instance, he can drop results before they are transferred to the client.
	We present implications of an active attacker and countermeasures in Section~\ref{sec:design:subsec:malAtt}.

	\vspace{0.5em}
	\begin{tdef}[Correctness]\label{thm:cor} Let \defScheme denote a \HSBT-scheme consisting of the six algorithms described in Definition \ref{thm:shbt}.
	Given a passive attacker, we say that \defScheme is correct if for all $\secpar \in \N$, for all \skey output by \HSBTSetupPara, for all key-value pairs \inputSet used by \HSBTEncPara to output \treeEnc, for all \range used by \HSBTTokPara to output \token, for all \cipherSetRes output by \HSBTSearchRangePara, the values \valueBSetResult output by \HSBTDecPara are all values in \inputSet for which the corresponding keys \keySetResult fall in \range, \ie $\valueBSetResult = \{\valueBEl{i}|(\keyB{i}, \valueBEl{i}) \in \inputSet \land \keyB{i} \in [\rangeS, \rangeE] = \range\}$.
	\end{tdef}

	Our security model, which we define next, is based on a three step framework introduced by Curtmola et al. in \cite{curtmola_searchable_2006}.
	The first step is to formulate a leakage, \ie an upper bound of the information that an adversary can gather from the protocol.
	Secondly, one defines the \real and a \ideal game for an adaptive adversary \adversary and a polynomial time simulator \simulator.
	\real is the execution of the actual protocol and \ideal utilizes \simulator to simulate the real game by using only the formulated leakage.
	An adaptive adversary can use information learned in previous protocol iterations for its queries.
	Third, a scheme is CKA2-secure if one can show that \adversary can distinguish the output of the games with probability negligibly close to 0.
	This in turn means that \adversary does not learn anything besides the leakage stated in the first step, because otherwise he could use this additional information to distinguish the games.

	At security models of searchable encryption schemes so far, the leakage only covers the transaction between the client and server.
	In our scenario, there is an additional transaction between the server and the secure hardware that can be viewed by the adversary.
	Therefore, we extend the CKA2-security to CKA2-HW-security by introducing a new type of leakage denoted as \leakH.
	It consists of the inherent leakage of the used secure hardware and the inputs/outputs to/from the secure hardware.
	\vspace{0.5em}
	\begin{tdef}[CKA2-HW-security]\label{thm:sec} 
		Let \defScheme denote a \HSBT-scheme consisting of the six algorithms described in Definition \ref{thm:shbt}.
		Consider the probabilistic experiments \real and \ideal, whereas \adversary is a stateful passive adversary and \simulator is a stateful simulator that gets the leakage functions \leakE and \leakH.
		\vspace*{0.3em}

		\begin{description}[style=unboxed,leftmargin=.3cm, itemsep=-2pt,topsep=0pt]
			\item[\real:] 
				the challenger runs \HSBTSetupPara to generate a secret key \skey.
				\adversary outputs a set of key-value pairs $\inputSet = ((\keyB{1}, \valueBEl{1}), ... (\keyB{n}, \valueBEl{n}))$.
				The challenger calculates $\treeEnc \rnd \HSBTEncPara$ and passes \treeEnc to \adversary.
				Afterwards, \adversary makes a polynomial number of adaptive queries for arbitrary ranges \range.
				The challenger returns search tokens \token to \adversary after calculating $\token \rnd \HSBTTokPara$.
				\adversary can use \treeEnc and the returned tokens at any time to make a query to the secure hardware.
				The secure hardware returns a set of pointers \pointerSet.
				Finally, \adversary returns a bit \bitB that is the output of the experiment.
			\item[\ideal:] 
				the adversary \adversary outputs a set of key-value pairs $\inputSet = ((\keyB{1}, \valueBEl{1}), \allowbreak ..., \allowbreak  (\keyB{n}, \valueBEl{n}))$.
				Using \leakE, \simulator creates \treeEnc and passes it to \adversary.
				Afterwards, \adversary makes a polynomial number of adaptive queries for arbitrary ranges \range.
				The simulator \simulator creates tokens \token and passes them to \adversary.
				The adversary \adversary can use \treeEnc and the returned tokens at any time to make queries to \simulator (that simulates the secure hardware).
				\simulator is given \leakH and returns a set of pointers \pointerSet{}.
				Finally, \adversary returns a bit \bitB that is the output of the experiment.
		\end{description}
		\vspace*{0.3em}

		\noindent
		We say \defScheme is \tuple{\leakE,\leakH}-secure against adaptive chosen-keyword attacks if for all probabilistic polynomial-time algorithms \adversary, there exists a probabilistic polynomial-time simulator \simulator such that
		\vspace*{-0.3em}
		\begin{displaymath}
			|\Pr{\real = 1} - \Pr{\ideal = 1}| \leq \negl(\secpar)
		\end{displaymath}
	
	\end{tdef}	

\section{Search Algorithms}
\label{sec:design}

	In this section, we will present two different constructions that enable a client to the search for a single value or a range of values based on keys.
	We use \BPtrees in both constructions to achieve logarithmic search and SGX to protect the confidentiality and integrity of the data.


\subsection{Construction~1}
\label{sec:design:subsec:const1}

	We describe our first correct (according to Definition \ref{thm:cor}) and secure (according to Definition \ref{thm:sec}) construction in this section.
	The guiding idea of the construction is that the entire data should be stored and processed inside the enclave.	
	The client constructs the \BPtree locally, encrypts the \BPtree structure and the values with \skeyT and \skeyM, respectively, and sends both to the cloud provider.
	The SGX application gets deployed to an SGX capable server at a cloud provider (see Section~\ref{sec:overview:subsec:sgx} for details).

	Software measurement is used for remote attestation, \ie to prove to the client that the correct software is deployed on an SGX enabled CPU.
	During the deployment, the application reserves an SGX protected memory region.	
	A secure transfer protocol between client and server is used to deploy \skeyT inside the enclave.
	Thus \skeyT is only known by the enclave's process and the client. 
	The cloud provider and all other processes cannot access this key at any point in time.
	
	The next step is to load the \BPtree structure from an untrusted to the isolated memory region and use \skeyT to decrypt the tree nodes.
	It is important to note that the tree is still protected, because all data inside the enclave is secured by SGX.
	The enclave is then ready to receive search queries from the client.

	Since the values can be very large, the data transfer to the enclave can cause a severe performance overhead.
	We, therefore, store the values outside of the enclave.
	A second reason is the limited size of protected memory inside an enclave.
	Storing the values outside of the enclave has no security implications as the values are encrypted with an authenticated IND-CCA secure encryption scheme and they are stored in a randomized order.
	The untrusted part receives only pointers to the values from the trusted part and loads them from memory or disk by itself.

	We now describe the \HSBT-scheme \HSBTO consisting of the six algorithms \tuple{\HSBTOSetup,\allowbreak \HSBTOEnc,\allowbreak \HSBTOTok,\allowbreak \HSBTODec,\allowbreak \HSBTOSearchRange,\allowbreak \HSBTOSearchRangeTrusted} and utilizing a pseudo\-random permutation \ranPermDef in more detail:

	\begin{description}[style=unboxed, itemsep=-2pt,leftmargin=.3cm,topsep=0pt]			
		\item[$\skey \rnd \HSBTOSetupPara$:] 			
			Use input $\secpar$ to execute \PSEGen two times and output $\skey = \tuple{\skeyT, \skeyM}$ that the two instances of \PSEGenPara output. 
			\skeyM and \skeyT are kept secret at the client.
			\skeyT is additionally shared with the server enclave using a secure transport and deployment protocol.
			The enclave stores \skeyT inside the isolated enclave.
		\item[$\treeEnc \rnd \HSBTOEncPara$:] 
			Take \skey and $\inputSet = ((\keyB{1}, \valueBEl{1}), \allowbreak ..., \allowbreak (\keyB{n},\allowbreak \valueBEl{n}))$ as input.
			Start by storing all values $\valueBSet = (\valueBEl{1}, ..., \valueBEl{n})$ in a random order.
			An almost standard \BPtree insertion is used for all keys.
			One difference is that every newly created node \node gets an id according to the creation order, \ie the first node gets id $0$ ($\nodeIdEl{} = 0$), the second id $1$ ($\nodeIdEl{} = 1$) et cetera.
			After each pair is inserted, the empty position for keys and pointers in the tree get filled up. 
			More specifically, a node \node that contains \nodeKeyAmount keys from the domain gets filled with $(\branchingF - 1 - \nodeKeyAmount)$ keys \maxEl and $(\branchingF - \nodeKeyAmount)$ dummy pointers.	
			Then, all keys and pointers are padded to a length of \SI{32}{\bit} (as mentioned before, this is no prerequisite of our solution).
			The ids are used by the algorithm to store the nodes at pseudorandom positions: $\nodePos{} = \ranPermPara{\skeyT}{\nodeIdEl{}}$.	
			Now, we have a \BPtree in which every node occupies the same storage space and the order of the nodes and values is random.
			Finally, \PSEEncParaVar{\skeyM}{\cdot} is used to encrypt every value and \PSEEncParaVar{\skeyT}{\cdot} is used to encrypt every node. 
			The encrypted nodes and values form the encrypted tree \treeEnc, which is protected by an authenticated IND-CPA secure encryption.		
		\item[$\token \rnd \HSBTOTokPara$:] 
			Use input \skeyT and $\range = [\rangeS, \rangeE]$ to calculate $\token  \rnd \PSEEncParaVar{\skeyT}{\rangeS || \rangeE}$ and output \token.
			Query for all elements below \rangeE or all elements above \rangeS can be created by using $\rangeS = \minEl$ or $\rangeE = \maxEl$, respectively.
		\item[$\valueBSetResult \rnd \HSBTODecPara$:] 
			Use input \skeyM to decrypt the encrypted values $\cipherSetRes = \tuple{\cipherEl{0}, ..., \cipherEl{j}}$: $\valueBSetResult = \tuple{\PSEDecParaVar{\skeyM}{\cipherEl{0}}, \allowbreak ...,  \PSEDecParaVar{\skeyM}{\cipherEl{j}}}$.
			Output \valueBSetResult.
	\end{description}
	\vspace*{0.3em}

	\noindent Executed at the server on untrusted hardware:
	\begin{description}[style=unboxed,leftmargin=.3cm, topsep=0pt]
		\item[$\cipherSetRes \rnd \HSBTOSearchRangePara$:] 
			Take the search token \token as input, pass \token to the trusted part and return the result values (by dereferencing the pointers): $\cipherSetRes = \tuple{\cipherEl{0}, ..., \cipherEl{j}}$.
			See Algorithm~\ref{algo:SHBTSearch1} for details.
	\end{description}
	\vspace*{0.3em}

	\noindent Executed at the server on secure hardware:
	\begin{description}[style=unboxed,leftmargin=.3cm, itemsep=-2pt,topsep=0pt]
		\item[$\pointerSet \rnd \HSBTOSearchRangeTrustedPara$:] 
			Take the search token \token as input, decrypt the token, perform a \BPtree traversal and return the pointers that lead to values falling in the queried range in a random order.
			The list with the currently processed nodes (\nodeSet) is shuffled to further hide the order.
			Note that the enclave does not have to decrypt the nodes, because the client's encryption was removed already and the CPU has plaintext access to them.
			See Algorithm~\ref{algo:SHBTSearchTrusted1} for details.
	\end{description}

	\begin{algorithm}		
		\fontsize{8}{9.5}\selectfont%
		\begin{algorithmic}[1]
			\State $results$ = \Call{HSBT\_SearchRangeTrusted}{\token}  
			\State \textbf{return} *$results_0$, *$results_1$, ...
		\end{algorithmic}
		\caption{\HSBTOSearchRangePara \label{algo:SHBTSearch1}}
	\end{algorithm}

	\begin{algorithm}[ht]		
		\fontsize{8}{9.5}\selectfont%
		\begin{algorithmic}[1]
			\State \tokenP = \PSEDecParaVar{\skeyT}{\token}
			\State parse \tokenP as $\tuple{\rangeS, \rangeE}$

			\State \pointerSet = $\emptyset$ \Comment{Pointer list}  
			\State \nodeSet = $\{root\}$ \Comment{Nodes list}  

			\While{\nodeSet $\neq$ $\emptyset$}
				\State \pointerSetEl{tmp} = $\emptyset$ 
				\State \node = \nodeSet.\Call{pop}{\,}					
			
				\If{not \node.\textit{isLeaf} $\&\&$ \rangeS $<$ \nodeKey{1}}
					\State \pointerSetEl{tmp}.\Call{add}{\nodePointer{0}}
				\EndIf

				\For{$i = 1$; $i < \branchingF - 1$; $i$++}
					\If{
						(\nodeKey{i} $\leq$ \rangeS $<$ \nodeKey{i + 1}) $||$ 
						(\nodeKey{i} $\leq$ \rangeE $<$ \nodeKey{i + 1}) $||$ \newline 
						\phantom a \phantom a \phantom a \phantom a \phantom . (\rangeS $\le$ \nodeKey{i}  $\&\&$ \nodeKey{i + 1} $\le$ \rangeE)   
					}
						\State \pointerSetEl{tmp}.\Call{add}{\nodePointer{i}}
					\EndIf		
				\EndFor

				\If{\nodeKey{b - 1} $\leq$ \rangeE}
					\State \pointerSetEl{tmp}.\Call{add}{\nodePointer{b - 1}}
				\EndIf

				\For{\pointer{} in \pointerSetEl{tmp}}
					\If{\text{*\pointer{}}.\textit{isLeaf}}
						\State \pointerSet.\Call{add}{\pointer{}}
					\Else
						\State \nodeSet.\Call{add}{*\pointer{}}
					\EndIf
				\EndFor
				\State \nodeSet = random permutation of \nodeSet
			\EndWhile			

			\State \pointerSet = random permutation of \pointerSet

			\State \textbf{return} \pointerSet
		\end{algorithmic}
		\caption{\HSBTOSearchRangeTrustedPara \label{algo:SHBTSearchTrusted1}}
	\end{algorithm}

	The construction is correct according to Definition \ref{thm:cor}, because it performs a textbook \BPtree traversal (with an additional randomize step) inside SGX.
	Furthermore, it is based on a correct \PSE-scheme.

	A substantial advantage of the approach to put the whole functionality in an enclave is that SGX provides a high level of protection for the enclave memory and thus the \BPtree.
	One would even assume that an attacker does not learn anything about the tree traversal, because all operations are executed only in protected memory.
	Unfortunately, SGX is not able to fully protect the access pattern.
	
	As mentioned before, we assume that an attacker is able to reveal the accessed pages during the \BPtree traversal.
	Each page allocates \SI{4}{\kilo\byte} and every encrypted node consists of one or multiple AES-blocks.
	Up to $k = \SI{4}{\kilo\byte} / (o \cdot \SI{128}{\bit})$ nodes are contained in one page if $o$ AES-blocks are used by each node.
	Experiments showed that $102$ AES-blocks are used for each node if $\branchingF = 100$ and \SI{32}{\bit} keys and pointers are used.
	Therefore, even multiple of those huge nodes fit within a single page.
	We use the notation $\node \in \page$ to express that \node is stored in page \page.
	The \BPtree nodes are stored next to each other in memory and they fit in $\pageAmount = k / \nodeAmount$ pages $\pageSet = \tuple{\pageEl{1}, ... \pageEl{\pageAmount}}$.

	Next, we will prove the security of Construction 1.
	The first step is to define the leakage functions that are based on the attack model described in Section~\ref{sec:highLevelDesign:subsec:attackModel}.
	\begin{description}[style=unboxed,leftmargin=.3cm, itemsep=-2pt,topsep=0pt]	
		\item[\leakEPara:] 
			Given the key-value pairs $\inputSet = ((\keyB{1}, \valueBEl{1}), ... (\keyB{n}, \allowbreak \valueBEl{n}))$, this function outputs the amount $n$ of values, the size of each value and the amount of \BPtree nodes \nodeAmount.
		\item[\leakHPara:] 
			Given the key-value pairs $\inputSet$, the plaintext $B^+$-tree \tree and the search range \range and given point in time \timePar{}, this function outputs the pages access pattern \pageAccessPatPara and the values access pattern \valueBAccessPatPara. 

			Loosely speaking, the pages access pattern \pageAccessPatPara is a tree that contains all pages in \pageSet that get accessed when the range \range is searched for. 
			An edge in \pageAccessPat from a parent to a child means that the child page gets accessed after the parent page.
			More formally, we define $M$ as the set that contains pages, in which leaf nodes are present that contain keys from the search range, \ie $M=\{ \page \, | \, \page \in \pageSet \wedge \node \in \page \wedge \nodeEl{} \inB \tree \wedgeB \nodeIsLeafVar{} \wedgeB \nodeKey{j} \inB \range, \, j \inB [1,\branchingF-1]\}$.
			Additionally, we define \nodeParent{}{1} as the parent node of \node in \tree and \nodeParent{}{l} denotes the node that is reached by moving $l$ layers up in the tree \tree starting from \node.
			Now, we can specify the node set \nodeSetT of \pageAccessPat as $\nodeSetT = \{ \pageEl{i} \, | \, \pageEl{i} \in M\} \bigcup_{} \{ \pageEl{i} \, | \, \pageEl{j} \in M \wedge \nodeEl{1} \in \pageEl{j} \wedge \pageEl{i} \in \pageSet \wedge \nodeEl{2} \in \pageEl{i} \wedge \nodeEl{2} \in \tree \wedge \nodeEl{2} == \nodeParent{1}{l}, l \in [1,\treeH - 1] \}$.	
			The edge set of \pageAccessPat is $\{(\pageEl{i}, \pageEl{j}) \, | \, \pageEl{i}, \pageEl{j} \in \nodeSetT \wedge \exists 
			\, \nodeEl{1}, \nodeEl{2} \in \tree : \nodeEl{1} \in \pageEl{i} \wedge \nodeEl{2} \in \pageEl{j} \wedge i \neq j \wedge \nodeEl{1} == \nodeParent{2}{1}\}$.
			The time parameter \timePar{} defines a snapshot of the random (but fixed) order of sibling nodes at a given point in time.
			See Figure~\ref{fic:pageAccessPattern} for an illustrative example.

			The values access pattern \valueBAccessPatPara is defined as the pointers to the result values together with the pages that contain nodes in which these pointers are stored.
			More formally, $\valueBAccessPatPara = \{ (\page, \pointerSetEl{\page}) \, | \, \page \in \pageSet \wedge \node \in \page \wedge \node \in \tree \wedge \nodeIsLeaf \wedge \exists \, \nodeKey{j} \in \range, j \inB [1, \branchingF-1] \wedge \pointerSetEl{\page} = \{ \nodePointer{l} \, | \, \node \in \page \wedge \nodeKey{l} \in \range, l \inB [1, \branchingF-1] \}$.
			The time parameter \timePar{} defines a random but fixed order of the pointers.
			
	\end{description}

	\begin{figure}[h]%
	  \centering%
	    \begin{subfigure}{\columnwidth}
	      \includegraphics[width=\columnwidth]{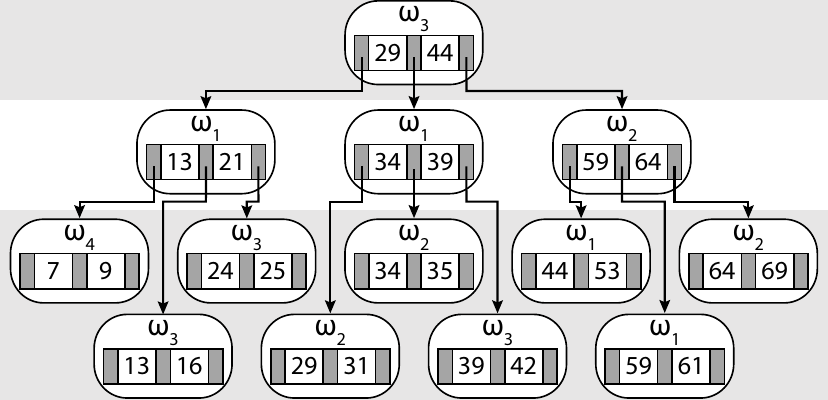}
	      \subcaption{}
	    \end{subfigure}
	  
	    \begin{subfigure}{0.49\columnwidth}
	      \includegraphics[width=\columnwidth]{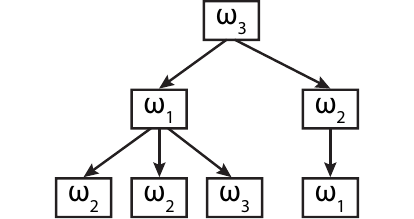}
	      \subcaption{}
	    \end{subfigure}
	    \begin{subfigure}{0.49\columnwidth}
	      \includegraphics[width=\columnwidth]{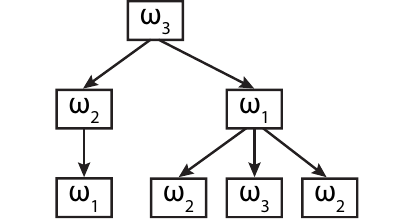}
	      \subcaption{}
	    \end{subfigure}
	    \vspace*{-.6em}
	    \caption{Illustration of page access pattern leakage: (a) example $B^+$-tree \tree (page containing the node on the top), (b) leakage \pageAccessPatPara for  $\range = [33,55]$ and $B^+$-tree \tree at \timePar{1}, (c) leakage \pageAccessPatPara for  $\range = [33,55]$ and $B^+$-tree \tree at \timePar{2}}%
	    \label{fic:pageAccessPattern}%
	\end{figure}%
	\vspace*{0.5em}

	\noindent
	The pages access pattern and the values access pattern are worst case estimations.
	An attacker would require many queries to exactly determine which page is a child of another page in \pageAccessPat.
	The same applies for the exact matching of result values to a page.
	\vspace{0.5em}

	\begin{thm}[Security]
		The secure hardware $B^+$-tree construction \HSBTO is \tuple{\leakE, \leakH}-secure according to Definition \ref{thm:sec}.
	\end{thm}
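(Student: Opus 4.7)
The plan is to prove indistinguishability by constructing a PPT simulator $\simulator$ that, given only $\leakE$ and $\leakH$, produces transcripts statistically (up to a negligible gap) indistinguishable from \real. I would structure the proof as a hybrid argument reducing security to the IND-CPA security of \PSE and the pseudorandomness of $\ranPermSym$.

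First I would describe $\simulator$ explicitly. Using $\leakEPara$, $\simulator$ learns $n$, the value sizes, and $\nodeAmount$. It produces $\treeEnc$ by outputting $n$ random ciphertexts of the appropriate sizes (for the values) together with $\nodeAmount$ random ciphertexts of the fixed node-size (for the tree nodes), placing the latter at addresses determined by a freshly sampled truly random permutation over $\{0,1\}^{\log_2 \nodeAmount}$. For each query it emits a random ciphertext of the appropriate token size in place of $\token$. When the adversary forwards the $i$-th token to the simulated secure hardware, $\simulator$ uses $\leakHPara = (\pageAccessPatPara, \valueBAccessPatPara)$ to construct the response: it returns the pointer multiset obtained from $\valueBAccessPatPara$ (already in the random but fixed order dictated by \timePar{}) as \pointerSet. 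Because the attacker's view of the trusted execution is only through the page-fault side channel and the I/O boundary, $\simulator$ additionally ``touches'' exactly the pages appearing in $\pageAccessPatPara$ in the tree order encoded by \timePar{}, which fully captures what can be observed from the enclave during Algorithm~\ref{algo:SHBTSearchTrusted1}.

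Next I would justify indistinguishability through three hybrids. In $H_1$, replace every \PSEEnc invocation (on values, nodes, and tokens) by encryptions of zero-strings of the same length; a standard reduction to IND-CPA of \PSE bounds $|\Pr{H_0=1}-\Pr{H_1=1}|$ by a negligible function. In $H_2$, replace $\ranPermPara{\skeyT}{\cdot}$, which determines \nodePos{}, by a truly random permutation; a standard PRP/PRF reduction bounds $|\Pr{H_1=1}-\Pr{H_2=1}|$ by a negligible function. At this point the distribution of $\treeEnc$ depends on the plaintexts only through $n$, the value sizes, and $\nodeAmount$; moreover the set of pages accessed during a search, the parent/child relationships among those pages, and the resulting pointer list depend on the plaintexts only through $\pageAccessPatPara$ and $\valueBAccessPatPara$ augmented with \timePar{}. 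Hence $H_2$ and \ideal{} are identically distributed, completing the chain.

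The main obstacle, and the step that needs the most care, is arguing that $H_2$ really equals \ideal{} at the side-channel level: I must verify that $\pageAccessPatPara$ captures \emph{everything} the adversary learns from both the cache and the page-fault channel during \HSBTOSearchRangeTrustedPara. Two subtleties drive this. First, the random shuffle of \nodeSet inside the enclave (line ``random permutation of \nodeSet'') together with the truly random $\ranPermSym$ must make the observed ordering of sibling page accesses a uniformly random permutation conditioned on the tree shape; this is what the time parameter \timePar{} in the leakage definition is designed to absorb, and I would show that the joint distribution of observed page-access sequences in $H_2$ matches exactly what $\simulator$ replays from $\leakHPara$. Second, I must argue that cache-line-level leakage within a single page does not add information beyond what page granularity already reveals in our construction, because each node is an integral number of AES blocks stored contiguously and the per-key comparisons in Algorithm~\ref{algo:SHBTSearchTrusted1} scan a fixed sequence of offsets regardless of the range \range. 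These two observations reduce the worst-case side-channel view to precisely $\pageAccessPatPara$ and $\valueBAccessPatPara$, closing the argument.
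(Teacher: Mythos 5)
Your proposal is correct and follows essentially the same route as the paper: a simulator built from \leakE and \leakH that emits random/zero-plaintext ciphertexts for nodes, values, and tokens, and replays \pageAccessPatPara and \valueBAccessPatPara to answer hardware queries, with indistinguishability resting on the security of \PSE. Your explicit hybrid chain (IND-CPA step, then PRP-to-random-permutation step) and the careful discussion of why the in-enclave shuffle and fixed-offset key scans reduce the side-channel view to exactly the stated leakage are a more rigorous writeup of the argument the paper gives informally, not a different approach.
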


	\begin{proof}
		We describe a polynomial-time simulator \simulator for which a PPT adversary \adversary can distinguish between \real and \ideal with negligible probability.

		\begin{itemize}[leftmargin=0.4cm, itemsep=-2pt,topsep=0pt]
		\item 
			\textit{Setup:}
			\simulator creates a new random key $\SKR = \PSEGenPara$ and stores it.
		\item 
			\textit{Simulating \treeEnc:}
			\simulator gets \leakE and creates \nodeAmount nodes $\nodeSet = \tuple{\nodeEl{1}, ..., \nodeEl{\nodeAmount}}$ filled with random keys, random pointers and increasing node ids. 
			These nodes are stored in the pages $\tuple{\pageEl{1}, ... \pageEl{\pageAmount}}$.
			Additionally, \simulator generates $n$ encryptions of random values $\cipherSet = \tuple{\cipherEl{1}, ..., \cipherEl{n}}$ using \PSEEnc, the number of values and the size of the values. 
			Every encrypted value is given a distinct index.
			\simulator outputs $\treeEnc = (\nodeSet, \cipherSet)$

			All described operations are possible, because the amount $n$ of values, the size of each value and the amount of nodes \nodeAmount are included in the leakage. 			
			The simulated \treeEnc has the same size as the output \real and the IND-CPA-security of \PSE makes the nodes indistinguishable from the output of \real.
		\item 
			\textit{Simulating \token:}
			The simulator \simulator creates two random values (\randomEl{1} and \randomEl{2}) and encrypts them: $\tokenS = \PSEEncParaVar{\SKR}{\randomEl{1}}$, $\tokenE = \PSEEncParaVar{\SKR}{\randomEl{2}}$.
			\simulator outputs $\token = (\tokenS, \tokenE)$.
			
			The simulated \token is indistinguishable from the output of \real as a result of the IND-CPA-security of \PSE.
		\item
			\textit{Simulating secure hardware:}
			At time \timePar{}, the simulator \simulator receives a range token \token and \leakH.
			\simulator uses \pageAccessPatPara to simulate the page access pattern.
			For this task, \simulator starts at the root of \pageAccessPat and follows the links as unambiguously defined by \timePar{}.
			Afterwards, it outputs \pointerSet = $\bigcup_{w}$ \pointerSetEl{\page} for every $(\page, \pointerSetEl{\page}) \in $ \valueBAccessPatPara.
			The leakage \valueBAccessPat determines the order of the pointers in \pointerSet for a specific point in time \timePar{}.
			
			\adversary cannot distinguish between the page access of \real and the simulated access, because the page access pattern delivers deterministic results.
			Therefore, the results are consistent for different requests of the same range and also for queries of distinct or overlapping ranges.			
			Furthermore, the number of result pointers matches and the pointers are consistent, because \valueBAccessPatPara is unambiguous. 
			The values pointed on are indistinguishable, because they are protected by IND-CCA secure encryption.
		\end{itemize}

	\end{proof}

	Cloud computing enables the fast and cost-effective processing of large amounts of data.
	Unfortunately, Construction~1 suffers from the substantial problem mentioned before: the memory reserved for SGX is limited to \SI{128}{\mega\byte}, and only about \SI{96}{\mega\byte} can be used for data and code.
	SGX supports a larger enclave size, but enclave pages have to be swapped in and out in this case.
	Our evaluation (see Section \ref{sec:eval}) shows that even $50{,}000{,}000$ values are possible.
	A \BPtree, however, is only a small part of a full encrypted DBMS based on our constructions.
	Other components would occupy large regions of the restricted enclave memory and thus further limit the available space.	
	For that reason, we present a second construction in the next section that does not have this problems.	
	

\subsection{Construction~2}
\label{sec:design:subsec:const2}

	In this section, we describe our second correct (according to Definition~\ref{thm:cor}) and secure (according to Definition \ref{thm:sec}) construction.
	Instead of loading all nodes into the enclave the main idea is to only load the nodes required to traverse the tree.
	The challenge is to optimize the communication bottleneck between the untrusted part and the enclave.
	We performed extensive benchmarking and algorithm engineering in order to identify and minimize the most important run-time consuming tasks, such as switching between the untrusted part and the enclave.
	The decisive advantage of our second construction is that the required memory space inside the enclave is $O(1)$ for a tree of arbitrary size.
	The trade-off is that all nodes are stored encrypted inside main memory or on disk and thus have to be decrypted by the enclave.
	This also leads to a slightly larger leakage than in the first construction, namely a finer-granular access pattern on node instead of page level (details are described by a formal model and proof later in this section).

	The setup of the \HSBT-scheme is slightly different than in the first construction in order to implement the described features.
	As before, the \BPtree is constructed and encrypted at the client and is then transferred to the cloud provider.
	However, the application does not reserves memory for the whole \BPtree structure inside the enclave. 
	Instead, it only reserves a fixed space, denoted as \reservedSpace, for on the fly processing.
	The remote attestation and secure key deployment are performed as in the previous construction.

	\tuple{\HSBTTSetup,\:\allowbreak \HSBTTEnc,\:\allowbreak \HSBTTTok,\:\allowbreak \HSBTTDec,\:\allowbreak \HSBTTSearchRange,\allowbreak \HSBTTSearchRangeTrusted} is a second \HSBT-scheme consisting of six algorithms that utilizes a pseudo\-random permutation \ranPermDef.

	All algorithms but \HSBTTSearchRange and \HSBTTSearchRangeTrusted exactly match the descriptions of Construction~1.
	In contrast to Construction~1, these two algorithm get extended by the parameter \treeEnc and $nodes$, respectively, because it is required to pass the encrypted tree and the encrypted nodes inside the enclave.
	We now describe the modified algorithms:
	
	\begin{description}[style=unboxed,leftmargin=.3cm, itemsep=-2pt,topsep=0pt]
		\item[$\cipherSetRes \rnd \HSBTTSearchRangePara$:] 
			Take the search token \token and the encrypted tree \treeEnc as input.
			At the beginning, pass only the root node to the trusted part and receive pointers to nodes that should be traversed next.
			The trivial solution is to pass one node after another to the enclave.
			A severe problem with this design is that every context switch from the untrusted to the trusted part or back causes a substantial overhead.
			We therefore optimized the number of context switches: transfer as many nodes as currently in the queue, but not more than fit into \reservedSpace. 
			We denote the maximal number of nodes as \maxAmount, which is directly influenced by the space reserved inside the enclave during the setup process: $\maxAmount = \reservedSpace / (o \cdot \SI{128}{\bit})$ where $o$ is the number of AES-blocks used by each node.
			Nodes are passed until no further are requested.
			Then output \cipherSetRes by dereferencing pointers to the values.
			See Algorithm~\ref{algo:SHBTSearch2} for details.
		\item[$\pointerSet \rnd \HSBTTSearchRangeTrustedPara$:] 
			Take a search token \token and nodes \nodeSet as input.
			Every incoming node is encrypted with \skeyT.
			During the setup phase, \skeyT was deployed inside the secure hardware.
			Therefore, the algorithm is able to decrypt all nodes and the token.
			Then, search all keys falling in the query range, whereby \emph{all} keys are accessed.
		 	Finally, return the corresponding pointers in a random order.			
			See Algorithm~\ref{algo:SHBTSearchTrusted2} for details.
	\end{description}
	\vspace*{0.2em}

	\begin{algorithm}		
		\fontsize{8}{9.5}\selectfont%
		\begin{algorithmic}[1]
			\State \nodeSet = $\emptyset$ \Comment{FIFO queue}  

			\State \nodeSet.\Call{enqueue}{root}

			\State $results$ = $\emptyset$\;   

			\While{\nodeSet $\neq$ $\emptyset$}
				\For{i=0; i $<$ \nodeSet.\textit{size} $\&\&$ i $<$ maxAmount; \phantom . $i$++}
					\State $\nodeSet_{tmp}$ = \nodeSet.\Call{dequeue}{\phantom .}
				\EndFor

				\State $results_{tmp}$ = \Call{HSBT2\_SearchRangeTrusted}{\token, $\nodeSet_{tmp}$}

				\For{$($\textit{isLeaf}, $p$$)$ in $results_{tmp}$}
					\If{\textit{isLeaf}}
						\State $results$.\Call{add}{*$p$}
					\Else
						\State \nodeSet.\Call{enqueue}{*$p$}
					\EndIf
				\EndFor
			\EndWhile

			\State \textbf{return} $results$
		\end{algorithmic}
		\caption{\HSBTTSearchRangePara \label{algo:SHBTSearch2}}%
	\end{algorithm}
	\vspace*{-1em}

	\begin{algorithm}[ht]		
		\fontsize{8}{9.5}\selectfont%
		\begin{algorithmic}[1]
			\State \tokenP = \PSEDecParaVar{\skeyT}{\token}
			\State parse \tokenP as $\tuple{\rangeS, \rangeE}$

			\State $\nodeSet_{tmp}$ = \{\PSEDecParaVar{\skeyT}{\text{*X}_0}, \PSEDecParaVar{\skeyT}{\text{*X}_1}, ...\}

			\State \pointerSet = $\emptyset$
			
			\For {\node in $\nodeSet_{tmp}$}
				\If{not \node.\textit{isLeaf} and \rangeS $<$ \node.$k_1$}
					\State \pointerSet.\Call{add}{\nodePointer{0}}						
				\EndIf

				\For{$i = 1$, $i < \branchingF - 1$, $i$++}			
					\If{
						(\node.$k_i$ $\leq$ \rangeS $<$ \node.$k_{i + 1}$) $||$ (\node.$k_i$ $\leq$ \rangeE $<$ \node.$k_{i + 1}$) $||$ \newline \phantom a \phantom a \phantom a \phantom a \phantom a (\rangeS $\le$ \node.$k_i$ $\&\&$ \node.$k_{i + 1}$ $\le$ \rangeE)}  
							\State \pointerSet.\Call{add}{\nodePointer{i}}
					\EndIf	
				\EndFor

				\If{\rangeE $\geq$ \node.$k_{b-1}$}
					\State \pointerSet.\Call{add}{\nodePointer{b-1}}
				\EndIf
			\EndFor

			\State \pointerSet = random permutation of \pointerSet

			\State \textbf{return} $($\text{*P$_0$}.\textit{isLeaf}, P$_0)$, $($\text{*P$_1$}.\textit{isLeaf}, P$_1)$, ...
		\end{algorithmic}
		\caption{\HSBTTSearchRangeTrustedPara \label{algo:SHBTSearchTrusted2}}
	\end{algorithm}

	The construction is correct according to Definition \ref{thm:cor}, because it is based on a textbook \BPtree traversal. 
	The difference to the textbook algorithm is that the nodes are loaded inside the enclave after another and that each node is encrypted.
	These changes do not influence the correctness, because each node remains accessible to the enclave and the encryption (at the client) and the decryption (inside the enclave) are based on a correct \PSE-scheme.

	We prove the security of Construction 2 by again defining the leakage functions that are based on the attack model described in Section~\ref{sec:highLevelDesign:subsec:attackModel}.
	
	\begin{description}[style=unboxed,leftmargin=.3cm, itemsep=-2pt,topsep=0pt]	
		\item[\leakEPara:] 
			Given the key-value pairs $\inputSet = ((\keyB{1}, \valueBEl{1}), ... (\keyB{n}, \allowbreak \valueBEl{n}))$, this function outputs the amount $n$ of values, the size of each value and the amount of \BPtree nodes \nodeAmount.
		\item[\leakHPara:] 
			Given the key-value pairs $\inputSet$, the plaintext $B^+$-tree \tree, the search range \range and given point in time \timePar{}, this function outputs the nodes access pattern \treeAccessPatPara and the value pointers access pattern \valueBAccessPatPara. 

			The nodes access pattern \treeAccessPatPara is a tree that contains the storage positions of all nodes in \tree that get accessed when searching for the range \range.
			For a more formal definition, we denote the set of leaf nodes that contain keys from the range as $M$, \ie $M=\{ \nodeEl{} \, | \, \nodeEl{} \inB \tree \wedgeB \nodeIsLeafVar{} \wedgeB \nodeKeyVar{}{j} \inB \range, \, j \inB [1,\branchingF-1]\}$.
			We again denote \nodeParent{}{j} as the parent node of \node that is reached by moving $j$ layers up in the tree \tree starting from \node.
			Here, we denote a node that only contains the storage position of a node \nodeEl{i} as \nodeElT{i}. 
			Now, we can specify the node set \nodeSetT of \treeAccessPat as $\nodeSetT = \{ \nodeElT{i} \, | \, \nodeEl{i} \inB M\} \cup_{} \{ \nodeElT{i} \, | \, \nodeEl{i} \in \tree \wedge \node \in M \wedge \nodeEl{i} == \nodeParent{}{j}, j \in [1,\treeH - 1] \}$.	
			The set of directed edges in \treeAccessPat is $\{(\nodeElT{i}, \nodeElT{j}) \, | \, \nodeElT{i}, \nodeElT{j} \in \nodeSetT \wedge \exists \, \nodeEl{i}, \nodeEl{j} \in \tree \, : \,  \nodeEl{i} == \nodeParent{j}{1}\}$.
			The time parameter \timePar{} defines a snapshot of the random (but fixed) order of sibling nodes at a given point in time. 
			See Fig.~\ref{fic:treeAccessPattern} for an illustrative example.

			The value pointers access pattern \valueBAccessPatPara is defined as the pointers to the result values together with the leaf nodes in which these pointers are stored.
			More formally, $\valueBAccessPatPara = \{ (\nodeEl{}, \pointerSetEl{\node}) \, | \, \node \in \tree \wedge \nodeIsLeaf \wedge \exists \, \nodeKeyVar{}{j} \in \range, j \inB [1, \branchingF-1] \wedge \allowbreak \pointerSetEl{\node} = \{ \nodePointer{l} \, | \, \allowbreak \nodeKeyVar{}{l} \in \range, l \inB [1, \branchingF-1] \}\}$.
			The time parameter \timePar{} defines a random but fixed order of the pointers.
			
	\end{description}

	\begin{figure}[h]%
	  \centering%
	    \begin{subfigure}{\columnwidth}
	      \includegraphics[width=\columnwidth]{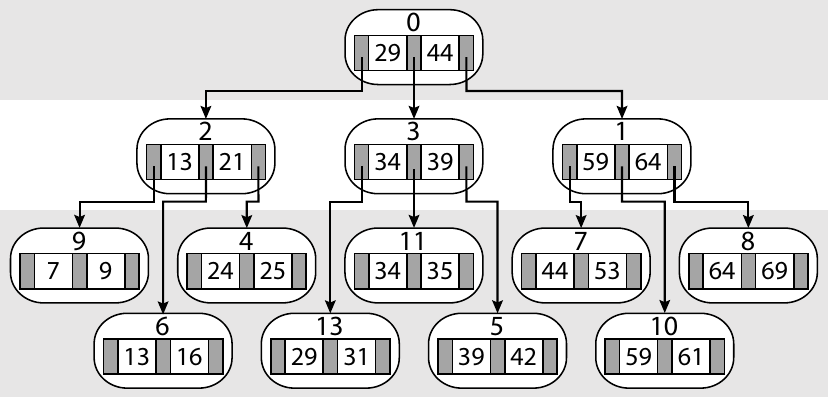}
	      \subcaption{}
	    \end{subfigure}
	  
	    \begin{subfigure}{0.49\columnwidth}
	      \includegraphics[width=\columnwidth]{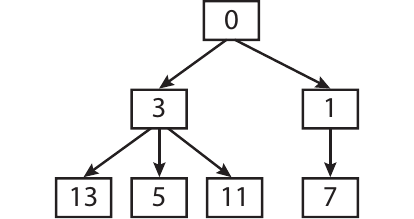}
	      \subcaption{}
	    \end{subfigure}
	    \begin{subfigure}{0.49\columnwidth}
	      \includegraphics[width=\columnwidth]{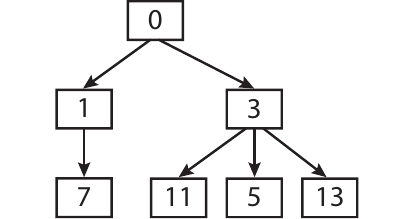}
	      \subcaption{}
	    \end{subfigure}
	    \vspace*{-.6em}
	    \caption{Illustration of nodes access pattern leakage: (a) example $B^+$-tree \tree (random storage position on the top), (b) leakage \treeAccessPatPara for  $\range = [33,55]$ and $B^+$-tree \tree at \timePar{1}, (c) leakage \treeAccessPatPara for  $\range = [33,55]$ and $B^+$-tree \tree at \timePar{2}}%
	    \label{fic:treeAccessPattern}%
	\end{figure}%
	\vspace{0.5em}

	\begin{thm}[Security]
		The secure hardware $B^+$-tree construction \HSBTT is \tuple{\leakE, \leakH}-secure according to Definition \ref{thm:sec}.
		\label{thm1}
	\end{thm}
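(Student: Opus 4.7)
The plan is to mirror the proof of security for Construction~1, adapting it to the finer-granular leakage of Construction~2 and to the fact that encrypted nodes now cross the enclave boundary one batch at a time. I would construct a stateful simulator \simulator that, given only \leakE and \leakH, produces transcripts indistinguishable from \real for any PPT adaptive adversary \adversary. The simulator maintains a dictionary mapping each simulated node identifier to a freshly generated ciphertext, so that repeated accesses to the same logical node yield consistent bit strings, exactly as an authenticated IND-CCA encryption would.

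For \emph{Setup}, \simulator draws a fresh key $\SKR \rnd \PSEGenPara$. For \emph{simulating \treeEnc}, \simulator reads $n$ and \nodeAmount from \leakEPara, generates \nodeAmount encryptions of random plaintexts of the correct node size under \SKR, assigns them pseudorandom storage positions via \ranPerm, and generates $n$ encryptions of random plaintexts of the correct value sizes. Because \PSE is IND-CPA, the joint ciphertext distribution is indistinguishable from that produced in \real, and the storage-position layout is identically distributed since it too is a pseudorandom permutation. For \emph{simulating \token}, \simulator picks two fresh random strings and encrypts their concatenation under \SKR; IND-CPA-security again gives indistinguishability from the real token, and since every real token is freshly randomized there is no risk of collision leaking information.

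The interesting step is \emph{simulating the secure hardware}. When \adversary (in the role of the untrusted part of the server) invokes the enclave with a token \token and a batch of ciphertexts $\nodeSet$, \simulator consults \leakHPara, namely the node access pattern \treeAccessPatPara rooted at the simulated root, and walks it in the order dictated by \timePar{}. On each invocation, \simulator returns the pointers and \nodeIsLeaf-flags that correspond to the children selected by \treeAccessPat at that position in the traversal, using the value-pointers access pattern \valueBAccessPatPara (again ordered by \timePar{}) to deliver the final leaf-pointer set \pointerSet in a consistent random permutation. Because \simulator enforces consistency through its state dictionary, overlapping or repeated range queries produce exactly the same ciphertexts and pointers that \real would, matching the deterministic view observable by \adversary between the untrusted part and the enclave.

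The main obstacle is arguing that the adversary cannot distinguish the simulated node-granularity access sequence from the real one, given that \adversary now sees \emph{individual} node ciphertexts enter and leave the enclave (rather than only page-granular accesses as in Construction~1). I would handle this by a standard hybrid argument: first replace each real node ciphertext with a random ciphertext of the same length under \SKR, using IND-CPA of \PSE hybrid by hybrid over the polynomially many nodes ever dereferenced; then observe that conditioned on these replacements, the traversal structure visible to \adversary is entirely a function of \treeAccessPatPara and \valueBAccessPatPara, which \simulator reproduces exactly. The only residual gap is a negligible advantage from the pseudorandom permutation \ranPerm used to place nodes at storage positions, which is absorbed into \negl(\secparSym). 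Concluding, $|\Pr{\real = 1} - \Pr{\ideal = 1}| \leq \negl(\secparSym)$, establishing Theorem~\ref{thm1}.
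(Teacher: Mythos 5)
Your proposal is correct and follows essentially the same route as the paper: a simulator that generates random keys, random-filled encrypted nodes placed via the pseudorandom permutation, random encrypted tokens, and then answers enclave invocations by walking \treeAccessPatPara and \valueBAccessPatPara in the order fixed by \timePar{}, with indistinguishability resting on the IND-CCA security of \PSE and the determinism of the leaked access patterns. Your explicit hybrid over node ciphertexts and the state dictionary for consistency are just slightly more detailed renderings of the same argument the paper gives.
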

	\vspace*{0.3em}

	\begin{proof}
		The simulator \simulator works as follows:

		\begin{itemize}[leftmargin=0.4cm, itemsep=-2pt,topsep=0pt]	
		\item 
			\textit{Setup:}
			\simulator creates a new random key $\SKR = \PSEGenPara$ and stores it.
		\item 
			\textit{Simulating \treeEnc:}
			\simulator gets \leakE and creates \nodeAmount nodes filled with \SI{32}{\bit} random keys and random pointers.
			Every node \nodeEl{i} gets a unique node id $\nodeIdEl{i} \in [0, \nodeAmount]$ and the node is stored at the position $\nodePos{i} = \ranPermPara{\SKR}{\nodeIdEl{i}}$.
			The nodes get encrypted with \PSEEncParaVar{\SKR}{\nodePos{i}}.
			Additionally, \simulator generates $n$ encryptions of random values $\cipherSet = \tuple{\cipherEl{1}, ..., \cipherEl{n}}$ using \PSEEnc, the number of values and the size of the values. 
			Every encrypted value is given a distinct index.
			\simulator outputs $\treeEnc = (\nodeSet, \cipherSet)$.

			All described operations can be executed by \simulator, because the information required for the encryption of values is included in the leakage.		
			The simulated \treeEnc has the same size as the output of \real and the IND-CCA security of \PSE makes the nodes and values indistinguishable from the output of \real.
		\item 
			\textit{Simulating \token:}
			The simulator \simulator creates two random values \randomEl{1} and \randomEl{2} and encrypts them: $\token  \rnd \PSEEncParaVar{\skeyT}{\rangeS || \rangeE}$.
			\simulator outputs $\token$.
			
			The simulated \token is indistinguishable from the output of \real as a result of the IND-CCA security of \PSE.
		\item
			\textit{Simulating secure hardware:}
			At time \timePar{}, the simulator \simulator receives encrypted nodes (denoted as \nodeSet), a token \token and \leakH.
			It has to simulate the output of the secure hardware enclave.
			The simulator decrypts every node $\nodeEl{i} \in \nodeSet$ with \PSEDecParaVar{\SKR}{\nodeEl{i}}.
			We differentiate between two cases for every $\nodeEl{i}$:
			\begin{enumerate}[style=unboxed,leftmargin=.3cm,topsep=0pt]
				\item
					\nodeEl{i} is not leaf:
					\simulator reads the id of \nodeEl{i} and searches the corresponding \nodeElT{i} in \treeAccessPatPara.
					It returns a pointer to all children in the order defined by \timePar{}.
					
					\adversary cannot distinguish between the output of \real and the simulated output, because the pointers point to indistinguishable nodes according to the IND-CCA security of \PSE. 
					Furthermore, the results are consistent for different requests of the same range as the nodes access pattern delivers deterministic results and the pseudorandom permutation creates unambiguous positions for the simulated nodes.
					The same argument applies for queries of distinct or overlapping ranges.
				\item
					\node is leaf: \simulator uses the leakage \valueBAccessPatPara to output all result pointers \pointerSet = $\bigcup_{\node}$ \pointerSetEl{\node}, $\forall$ $(\node, \pointerSetEl{\node}) \in $ \valueBAccessPat in the order defined by \timePar{}.
					
					This output is indistinguishable from the output of \real as the number of result pointers matches and the pointers are consistent because \valueBAccessPatPara is unambiguous. 
					The values pointed on are indistinguishable, because they are protected by IND-CCA secure encryption.
			\end{enumerate}
		\end{itemize}
	\end{proof}
	
	The main difference in the leakages of Construction~1 and Construction~2 is the granularity of the tree and value pointers access pattern. 
	In the second construction, the attacker is able to reveal accesses on a node level. 
	In contrast, the attacker is able to reveal accesses on a page level in the first construction, because SGX inherently leaks the page access pattern.


\subsection{Active attacker}
\label{sec:design:subsec:malAtt}
	Construction~1 and 2 are secure and correct for a passive attacker, but our overall goal is to enable the outsourcing of data to untrusted cloud providers.
	Therefore, it is important to consider active attackers.
	This attacker type tries to thwart the correctness and tries to gain additional sensitive information by not following the defined protocol.
	We omit concrete definitions of correctness and CKA2-HW-security under a active attacker, because they are easily deducible from Definition~\ref{thm:cor} and Definition~\ref{thm:sec}.
	In the following, we consider only Construction~2, but the arguments and techniques can be applied to Construction~1 with minor modifications. 

	\textbf{Attack vectors.}
	We identified two basic attack vectors that cover a wide range of possible attacks.
	Firstly, the attacker can try to attack the protection mechanisms of SGX to gain insights about data and algorithm execution not under his control.
	We rely on SGX's protection mechanism that guarantees security and correctness under an active attacker.
	However, as defined in our attacker model (see Section~\ref{sec:highLevelDesign:subsec:attackModel}), we consider various side channels.
	Protection mechanisms against these are implementation details and thus described in our implementation section (Section~\ref{sec:impl}).
	Secondly, the active attacker can try to influence the data and protocol execution that is under his control.
	Namely, all encrypted nodes, encrypted values, encrypted tokens and the \HSBTTSearchRange algorithm.
	He can do this to gain additional sensitive information or to prevent the client from getting the correct result.	
	In this section, we present arguments and protection mechanisms that address this second attack vector.

	For now, we assume that there is a mechanism that guarantees the following to the client: if it gets a result, all results in the response match the query and the response contains all matching results.
	This mechanism is presented later in this section, but we first consider the correctness and security implications of this assumption. 
	We show that there are no security implications by introducing an active attacker instead of an passive attacker.

	\textbf{Unprotected \emph{static data}.} 
	The only static data influenceable by the active attacker are values and nodes.
	The security of this data is guaranteed by using an authenticated IND-CPA secure encryption scheme. 
	The authenticated encryption also thwarts attacks on the correctness that try to modify or add static data, because these actions are noticed by the decryption algorithm.
	We do not consider the deletion of values or nodes, because it would lead to incomplete results and thus contradict the assumption stated above.

	\textbf{Unprotected \emph{dynamic data}.} 
	The only dynamic data influenceable by the attacker is the search token.
	Again, the security of tokens and the prevention of modifications and additions is provided by the authenticated IND-CPA secure encryption scheme.
	A replay attack does also not give any additional information to the attacker as the tree is static and the leakage stays the same for a replayed token.
	The correctness could only be influenced by a denial of service (DoS) attack by dropping tokens, but DoS is out of scope according to our attacker model.

	\textbf{Unprotected algorithms.} 
	\HSBTTSearchRange is the only remaining untrusted part under control of the active attacker.
	There is only a fixed set of deviations from the protocol that do not lead to a DoS.
	As explained before, all additional or modified static or dynamic data is directly noticeable by the enclave, which can simply reject further processing.
	Passing less nodes than possible to the enclave only slows down the process, but does not lead to additional information and does not impact the correctness.\footnote{Passing one node after another is already covered by the defined leakage.}
	The only remaining deviations are: (1) do not pass the root node first to the enclave, (2) pass the wrong nodes to the enclave (3) do not pass all requested nodes to the enclave and (4) do not pass all results to the client.
	It is important to note that all these deviations do influence the correctness of the protocol, but not the security.
	We now first present some straightforward, but problematic alternatives.
	Afterwards, we explain our mechanism that ensures the stated assumption and defends against the four possible deviations.

	One straightforward way to protect against deviations (2) - (4) is the usage of a Merkle tree structure (like done in \cite{lu_privacy-preserving_2012}).
	It would be possible by transferring all values of a leaf node that contains any matching value, the hashes of the matching nodes and the hashes of any sibling of all matching nodes.
	Together with a stored root hash, the client can check the hashes bottom-up.
	However, this requires to transfer more data than necessary, leaks unnecessary information by touching sibling nodes and requires additional processing at the client. 

	A further method to prevent against deviations (2) and (3) is to store a list of requested node ids inside the enclave.
	For every incoming node, the enclave removes the corresponding id from the list and can be sure that it received all and only correct nodes when the list is empty.
	However, this approach requires $O(r)$ storage inside the enclave for the total result size $r$.

	Deviations (4) can be mitigated by using the secure channel created during provisioning to transfer the results.
	Technically, it is possible, because the enclave has direct access to the values.
	Nevertheless, it would also require $O(r)$ additional storage to memorize $r$ result pointers.
	Additionally, it would not give any additional protection, because the data access are perceivable by untrusted software.

	We now present modifications to Construction~2 that do not exhibit the drawbacks of the Merkle tree approach and do not require $O(r)$ additional storage in the enclave.
	We utilize so called multiset hashes \cite{clarke_incremental_2003} as a building block to prevent the deviations (2) - (4) with only $O(1)$ additional storage.
	The properties of a multiset hash are: multiple values are hashed to a fixed-size bit string, elements can be added incrementally and efficiently, the order of the input can be arbitrary and there is an efficient equality check for two multiset hashes.
	Two multiset hashes are equal if the hash was calculated over same elements (independent of the order).
	As a concrete example, MSet-XOR-Hash from \cite{clarke_incremental_2003} can be used as a multiset hash.
	We refer to \cite{clarke_incremental_2003} for details and a security proof that is based on the hardness of breaking the underlying pseudorandom functions.

	The construction of the \BPtree is changed in the following way: the root node and every inner node \node store the child id \nodeIdCh{i} next to the pointers \nodePointer{i}; every leaf node \node stores a hash \nodeMsgHash{i} of the plaintext values next to the pointers \nodePointer{i}.
	The id of the root node $root.id$ is sent to the enclave together with \skeyT in the secure channel established during the setup phase.
	
	In the following modifications, we abbreviate the two algorithms of Construction~2 (\ie \HSBTTSearchRange and \HSBTTSearchRangeTrusted) as $\mathtt{Untrusted}$ and $\mathtt{Trusted}$.
	For every node \node that $\mathtt{Trusted}$ receives, it decrypts the node and checks if \nodeId matches $root.id$. 
	If this is the case, it creates a new nonce.
	Otherwise, it expects to receive a nonce from $\mathtt{Untrusted}$.
	The processing aborts if no nonce is present and any node but the root node was passed.
	This thwarts deviation (1).

	For every nonce, $\mathtt{Trusted}$ stores how many nodes it expects at the moment in $expectedNodes\-Amount$ and it stores three multiset hashes: $expectedNodes\-Hash$, $receivedNodes\-Hash$ and $result\-Values\-Hash$.
	For the root and every inner node \node, it adds \nodeIdCh{i} for every expected child to $expected\-Nodes\-Hash$.
	For each received node \node, it adds \nodeIdEl{} to $receivedNodes\-Hash$.
	It then processes the nodes as before.
	In each leaf node \node, it checks if \nodeKey{i} falls into the range.
	\nodeMsgHash{i} is added to $resultValuesHash$ if this is the case.
	Finally, it passes the list of pointers and the nonce to $\mathtt{Untrusted}$.

	$expectedNodes\-Amount$ is reduced for every incoming node. 
	After reaching $expectedNodes\-Amount = 0$ and adding the last ids of each received node to $receivedNodesHash$, $\mathtt{Trusted}$ compares $expected\-Nodes\-Hash$ with $received\-Nodes\-Hash$.
	If these hashes do not match, it did not receive the correct nodes from $\mathtt{Untrusted}$, aborts and deletes the nonce.
	Thus, the algorithm guarantees that it traversed all nodes that might contain an eligible result and thereby protects against deviation (2)\footnote{The attacker does not gain additional information from the processing of wrong nodes. 
	The reason is that the algorithm would not request them and thus already leaks that they do not contain values from the range.} and (3).
	Otherwise, it adds \nodeMsgHash{i} to $resultValuesHash$ for every result found in the last search round and returns the pointers together with an HMAC over $resultValuesHash$.
	$\mathtt{Untrusted}$ adds the HMAC to the client response.
	After the decryption, the client calculates a multiset hash over the received values, creates an HMAC and compares it with the received HMAC.
	The HMAC comparison protects against deviation~(4).

	Summarizing, the client is guaranteed the stated assumption through the presented mechanism.
	Performance measurements showed that the described protection mechanisms introduce an overhead of about $0.3ms$ at a query result size of $100$.

\subsection{Multiple users}
\label{sec:design:subsec:exten}

	So far, we considered a setup comprising one user, but multiple user directly supported by \name.
	Multiple users are able to concurrently query data without limitations, as concurrent tree traversals do not influence each other.
	The only requirement is that each user has access to the key \skeyT to create query tokens and \skeyM to decrypt the result.
	It is also possible that each user shares a different key \skeyT with the enclave.
	This would hide the search pattern of one user from all other users, but it requires a small modification in the protocol: the token has to be accompanied by client information, because the enclave has to identify the key to use for the token decryption.
	The nodes can be encrypted by any key that is known to the enclave.
	Particularly, it is not required to be a key shared with any user.


\section{Implementation}
\label{sec:impl}

Subsequently, we elaborate on implementation details, which are important with respect to performance or security.

\textbf{Platform.}
Our \name implementation is based on Intel's Software Development Kit (SDK) for SGX.
Services like attestation are available through the interfaces of the SDK.
The hardware platform is described in the evaluation (see Section~\ref{sec:eval}).

\textbf{Provisioning.} 
As described before, the client initially provisions \skeyT to the enclave.
This has to be done securely since \skeyT should not be revealed to any untrusted party.
We use the \emph{attestation} feature of SGX to establish a secure channel between the client and the enclave.
As described in Section~\ref{sec:overview:subsec:sgx}, the initial enclave memory content is measured during its creation. 
The first operation performed by the enclave is the creation of a key pair (secret key $sk_E$ and public key $pk_E$).
The randomness required for the key generation is provided by the hardware random number generator (\texttt{rdrand}~\cite{intel_manual}) available in current CPUs.
As the enclave's memory is isolated, $sk_E$ is never revealed to any other party.

Next, the enclave sends its just created $pk_E$ to the quoting enclave (QE).
The QE creates an signature over both, the measurement of the initial memory content of the enclave ($M_E$) (\ie the code and static data) and the public key: $\sigma_{QE}(M_E || pk_E)$.
$M_E$, $pk_E$ and $\sigma_{QE}(M_E || pk_E)$ are sent to the client who verifies the signature (given Intel's public key).
The client also verifies the measurement of the initial enclave state to establish trust into this enclave instance (\ie only if the measurement matches a known value she will continue).
The client then encrypts \skeyT with $pk_E$ and sends it back to the enclave, which is the only entity that can decrypt \skeyT.
Subsequently enclave and client share \skeyT, which they use for secure communication.

\textbf{Side channels.} 
	Our implementation is concerned with three types of (side) channels: external resource access, page-fault side channel and cache timing side channel (see Section~\ref{sec:highLevelDesign:subsec:attackModel}).
	In particular, by means of all three channels an adversary can observe access patterns to memory with the goal of inferring sensitive information from the observed access patterns.

	In \name, access to external resources by the enclave is limited to \BPtree nodes.
	Here, the attacker's goal is to leak information about the tree structure and ultimately the order of values stored in the database and searched for by the client. 
	Our Construction~2 explicitly covers the access to external resources, \ie nodes, in its leakage.

	While the external resources used by \name store only data, page-fault side channel and cache timing side channels allow to observe access to enclave memory which contains both, data and code.
	These side channels do not reveal sensitive data directly.
	However, they might reveal access to memory locations, which can reveal sensitive information if the memory access to code or data differs depending on sensitive data.

	As before, the attacker's goal might be to extract information about the tree structure and the order of values stored and searched.
	Additionally the attacker might try to extract cryptographic secrets from the enclave.
	For instance, the attack might aim at learning the secret key \skeyT used to encrypt tree nodes or learning one of the keys used during the establishment of a secure channel in the initial provisioning phase.

	The page-fault side channel allows the attacker to reliably observe memory access patterns, however, the granularity is relatively coarse (\SI{4}{\kilo\byte}).
	All accesses within the same page are indistinguishable for the attacker and, thus, are not exploitable.
	Construction~1 explicitly considers the leakage of the tree structure through this side channel. 
	In Construction~2 the page-fault side channel does not leak additional information, as nodes are smaller than memory pages and the nodes access pattern is leaked anyway by storing the \BPtree outside of the enclave.

	Cache timing side channel allow finer grained memory access observations while being less reliable. 
	Nevertheless, assuming an adversary who is able to observe accesses within a node, the attacker needs to determine which links to child nodes are followed.
	Our algorithm, however, accesses every key and pointer, whether the pointer is followed or not.
	By this and other fine grained implementation details, we achieve data independent accesses and thwart the cache timing side channel.

	Leakage of cryptographic keys are thwarted for page-fault and cache timing side channel by using leakage resilient implementations and hardware features~\cite{Bernstein_2012}. 
	For instance, the AES implementation used in \name uses AES-NI hardware which holds the S-Boxes in CPU registers instead of RAM, thus hampering cache side channel attacks~\cite{X10,MKS12}.

\textbf{Memory Management.}
We implemented both constructions of \name.
In particular Construction~2 is optimized with respect to memory transfer operations, and context switches between untrusted and trusted part.
To reduce the number of context switches a list of requested nodes is hold by the untrusted part.
Nodes from this list are transfered and processed at once, \ie with only one switch.
The memory transfer is optimized by exploiting the fact that the enclave can access the memory of its host process.
The \BPtree is loaded in the host process' memory from where the enclave can fetch nodes directly, decrypt them and process them.
This is much more efficient than copying nodes explicitly into enclave memory before decrypting and processing them.


\section{Performance Evaluation}
\label{sec:eval}

In this section, we present our evaluation results collected in a number of experiments.
First, we compare our two constructions described in Section~\ref{sec:design}.
Then, we examine the effects of the different ways of memory management for the constructions.
Finally, we compare our solution against the currently fastest polylogarithmic range query search algorithm presented in~\cite{demertzis_2016, faber2015rich}. 

\textbf{Platform.} Our evaluation system was equipped with an Intel Core i7-6700 processor at $3.40\,\unit{Ghz}$ (\ie sixth generation Core-i7, code name Skylake) and $32\,\unit{GB}$ DDR4 RAM.
64-bit Ubuntu 14.04.1 extended with SGX support (\eg EPC management) was used as operating system.
We evaluate our implementation on a real SGX hardware and provide actual measurements.

\subsection{Construction 1 vs.~Construction~2}
\label{sec:eval:subsec:constComp}

First, we compare the performance of our two constructions.
The parameters of the \BPtree are held constant for this comparison: the branching factor is $10$ and the tree contains $1{,}000{,}000$ key-value pairs.
Queries with five different sizes of the result set are used: $2^0, 2^4, 2^8, 2^{12}, 2^{16}$.
The search ranges were selected uniformly at random and every result size is tested with $1{,}000$ different ranges.
Figure \ref{fig:constructionsCompare} depicts the results of this evaluation, whereby the x-axis shows the size of the result set and the y-axis shows the median of the run-times in ms.

\begin{figure}[h]%
	\centering%
	\includegraphics[width=\columnwidth]
		{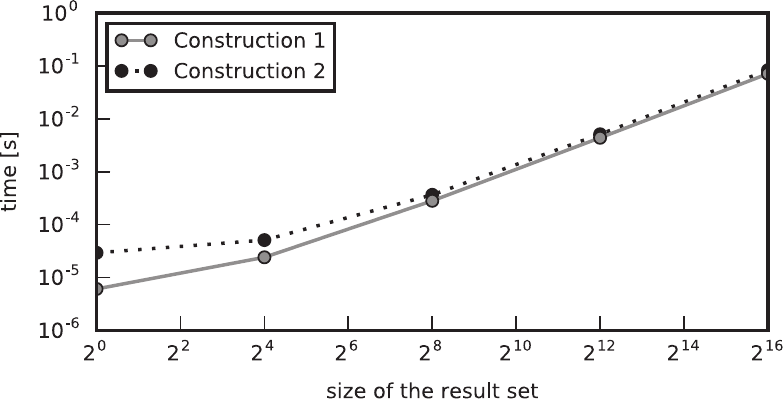} 
	\caption{Comparison of constructions}
	\label{fig:constructionsCompare}
\end{figure}%

The performance difference can be explained by the following effects:
\begin{itemize}[leftmargin=0.4cm, itemsep=-2pt,topsep=0pt]
	\item \textbf{Processor mode switch.} Before executing inside an enclave, the processor has to switch into ``enclave mode''.
	This includes, \eg storing the current CPU context on the host process' stack and loading the CPU context of the enclave. 
	In Construction~1 only one switch is required, whereas in the Construction~2 $O(\log_b {n})$ switches are performed, as at least each level of the \BPtree is loaded into the enclave.

	\item \textbf{Data transfer.} In Construction~1, the data transfer between trusted and the untrusted code is limited to the result set and the query whereas in Construction~2 also part of the \BPtree is transferred between the two components.

	\item \textbf{Access to plain data.} In Construction~1, decryption is a one-time effort after loading the entire \BPtree into the enclave.
	During query processing, it has access to plaintext nodes of the \BPtree.	
	Construction~2 incrementally loads the \BPtree nodes from untrusted storage. 
	All processed nodes need to be decrypted during query processing.
\end{itemize}
\vspace*{0.4em}

Construction~2, therefore, is slower than Construction~1 by a small factor at any result size.
For an increasing size of the result set, both algorithms search a linearly increasing part of the tree.	
Figure~\ref{fig:constructionsCompare} shows that the run-times of our two constructions converge (on a logarithmic scale).
This shows that the effects described above diminish compared to the search time of the algorithm.
	
\subsection{Memory Management}
\label{sec:eval:subsec:constCharacs}
	
	In order to identify the limiting parameters in the memory management of our two constructions, we evaluate \BPtrees with different tree sizes (amounts of key-value pairs) and branching factors.
	On each tree we ran $1{,}000$ randomly chosen queries with result set size of 100 and tested with the branching factors 10, 25, 50 and 100.
	The results of these evaluation are depicted in Figure~\ref{fig:bfCompare1} and Figure~\ref{fig:bfCompare2}.
	The x-axis shows the size of the \BPtree and the y-axis shows the median run-time of the queries.

	\begin{figure}[th]
		\centering
		\begin{subfigure}{\columnwidth}
		  \centering
		  \includegraphics[width=\columnwidth]{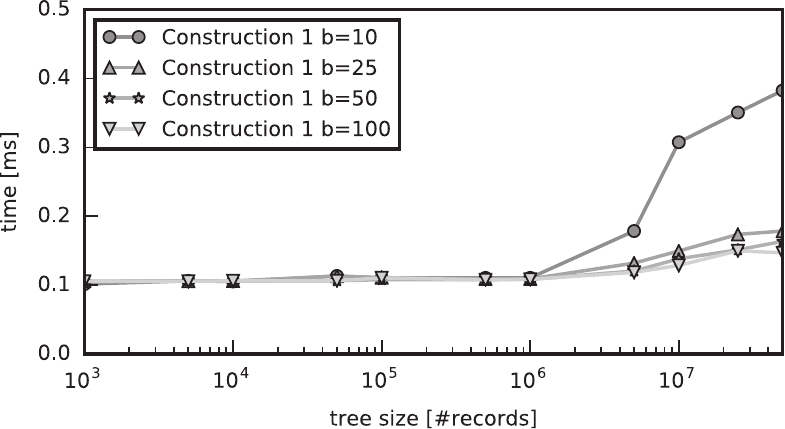}
		  \caption{}
		  \label{fig:bfCompare1}
		\end{subfigure}
		\begin{subfigure}{\columnwidth}
		  \centering
		  \includegraphics[width=\columnwidth]{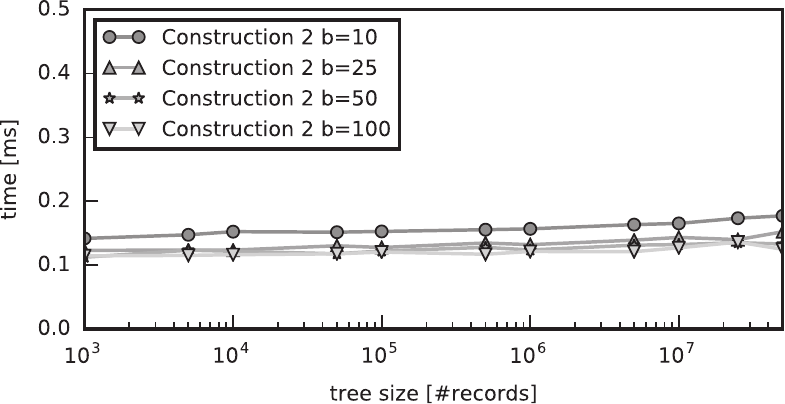}	  
		  \caption{}
		  \label{fig:bfCompare2}
		\end{subfigure}
		\caption{Effect of different branching factors in (a) Construction~1 and (b) Construction~2}
		\label{fig:bfCompare}
	\end{figure}

	In Figure~\ref{fig:bfCompare1}, we see a sharp increase of the run-time above a tree size of $10^6$ records.
	This is due to the exhausted memory in SGX and the virtual memory mechanism of the operating system that swaps pages in and out.
	This is not security critical, since pages remain encrypted and integrity protected by the SGX system, even when they are swapped out of the SGX protected memory.

	We see a significant difference in the impact of paging between the different branching factors.
	This becomes clear by considering the number of required page swaps.
	The lower the branching factor, the higher the number of nodes in a $B^+$-tree.
	The higher the number of nodes, the higher the number of accesses to different memory pages.
	The higher the number of different page accesses, the higher the probability of a swapped out page.

	In Figure~\ref{fig:bfCompare2}, we see that Construction~2 is not affected by paging, albeit supporting an unlimited tree size.
	Our data also shows that, as expected, higher branching factors result in better performance. 
	Disregarding the paging problem of Construction~1 above a tree size of $10^6$ records, a direct comparison of the constructions reveals that the runtime of Construction~2 approaches the runtime of Construction~1 for higher branching factors.

\subsection{Comparison with related work}
\label{sec:eval:subsec:searchEnc}

In this section, we compare our Construction~2 against the currently fastest approach with comparable security features and a security proof presented by Demertzis et al. in~\cite{demertzis_2016}.
The authors present seven different constructions that support range queries. 
The constructions have different tradeoffs regarding security, query size, search time, storage and false positives.
We do not compare against the highly secure scheme with prohibitive storage cost and also exclude the approaches with false positives as our construction does not lead to false positives.
Instead, we compare against the most secure approach without these problems: Logarithmic-URC.

We assume that the OXT construction from \cite{cash2013highly} is used as underlying symmetric searchable encryption scheme (SEE) by Logarithmic-URC.
Fundamentally, the SSE scheme is changeable, but the authors of~\cite{demertzis_2016} also utilize OXT for the security and performance evaluation.
One has to note that a quite equal construction as Logarithmic-URC was presented independently by Faber et al. in~\cite{faber2015rich}.
We implemented the algorithm of~\cite{demertzis_2016}, but a security and performance comparison to~\cite{faber2015rich} would lead to comparable results.

Table \ref{tbl:luComparison} compares our Construction~2 and Logarithmic-URC.
In this evaluation, we use a branching factor of 100 for Construction~2 and search for a randomly chosen range that contains 100 results.
Every test for the four different tree sizes ($100$, $1{,}000$, $10{,}000$, $100{,}000$) was performed $1{,}000$ times and the table shows the mean.
\vspace*{-0.3em}

\begin{table}[hbt]%
\centering%
\begin{tabular}{m{5em}||c|c|c|c}%
Tree Size 						& $100$ 		& $1{,}000$ 	& $10{,}000$ 	& $100{,}000$ \\%
\hline%
\hline%
Logarithmic-URC 				& \SI{0.015}{\second} 	& \SI{0.020}{\second} 	& \SI{0.051}{\second} 	& \SI{1.052}{\second} \\%
\hline%
Const.~2 \newline ($\branchingF = 100$)	& \SI{0.119}{ms} 	& \SI{0.121}{ms}	& \SI{0.124}{ms}	& \SI{0.125}{ms}\\%
\end{tabular}%
\caption{Time comparison of random range queries with Logarithmic-URC \cite{demertzis_2016} and our Construction~2}%
\label{tbl:luComparison}%
\end{table}%
\vspace*{-0.6em}

Our construction runs in about a tenth of a millisecond and with very moderate increase for all tree sizes.
In contrast, Logarithmic-URC requires at least multiple milliseconds up to a seconds for bigger trees.
A reason for the performance difference might be that OXT construction itself is less efficient then our construction.
Furthermore, the search time of OXT depends on the number of entries.
Logarithmic-URC fills the OXT construction with elements from a binary tree over the domain for every stored key.
An increasing domain severely increases the tree height of a binary tree and thus the number of entries for OXT.
In contrast, the height of the \BPtree in our construction increases much slower with the tree size.

A functional difference between Logarithmic-URC and Construction~2 is that Logarithmic-URC requires to fix the search key domain beforehand.
Cover a huge domain does negatively influence the setup and search time. 
In our construction, it is not necessary to fix the domain and the domain size has no performance implications.

It is not trivial to compare Logarithmic-URC and Construction~2 regarding security. 
The access pattern leakage and the leakage of the internal data structure of Logarithmic-URC is comparable to our access pattern leakages.
However, Logarithmic-URC additionally leaks the domain size, the search range size and the search pattern.
The search pattern reveals whether the same search was performed before, which might be sensitive information.
Furthermore, our construction only requires index storage in $O(n)$ instead of in $O(n \log D)$ as in Logarithmic-URC.	

\section{Related Work}
\label{sec:relWork}

\begin{table*}[t]
\footnotesize
\centering
\begin{tabular}{l||c|c|c|c|c}
Scheme 											& Search time 			& Query Size 	& Storage Size 	& Search Pattern Leakage & Order Leakage \\
\hline
\hline
Boneh, Waters \cite{BW07} 						& $O(n D)$ 				& $O(D)$ 		& $O(n D)$ 		& yes 					& no \\
\hline
Shi et al.~\cite{shi_multi-dimensional_2007} 	& $O(n \log D)$ 		& $O(\log D)$ 	& $O(n \log D)$ & yes 					& no \\
\hline
Shen et al.~\cite{shen_predicate_2009} 			& $O(n \log D)$ 		& $O(\log D)$ 	& $O(n \log D)$ & no 					& no \\
\hline
Lu \cite{lu_privacy-preserving_2012} 			& $O(\log n \log D)$ 	& $O(\log D)$ 	& $O(n \log D)$ & no 					& yes \\
\hline
\begin{tabular}{@{}l@{}}Demertzis et al. ~\cite{demertzis_2016} \\ Faber et al.~\cite{faber2015rich}\end{tabular}
 		& $O(\log R)$ 	& $O(\log R)$ 	& $O(n \log D)$ 	& yes 		& no \\
\hline
Poddar et al.~\cite{poddar_arx_2016} 			& $O(\log n)$ 			& $O(\log n)$ 	& $O(n)$ 		& no 					& yes \\
\hline
This paper 										& $O(\log n)$ 			& $O(1)$ 		& $O(n)$ 		& no 					& no \\
\end{tabular}
\caption{Comparison of range-searchable encryption schemes. $n$ is the number of keys, $D$ is the size of the plaintext domain and $R$ is the query range size.}
\label{tbl:comparison}
\vspace*{-0.4cm}
\end{table*}

Our work is related to searchable encryption, encrypted databases and secure implementations based on a TEE (\eg SGX).

\subsection{Searchable encryption}
\label{sec:relse}

Song et al.~introduced in \cite{song_practical_2000} the first searchable encryption schemes for single plaintexts.
In order to improve performance, Goh \cite{goh_secure_2003} and Curtmola et al.~\cite{curtmola_searchable_2006} introduced encrypted (inverted) indices.
However, these encryption schemes can only search for keyword equality and not ranges.

Searchable encryption scheme supporting range queries are rare.
Table~\ref{tbl:comparison} shows a comparison of different searchable encryption schemes and other schemes that support range queries.
Note that all existing range-searchable encryption schemes leak the access pattern -- including ours.
The first range-searchable scheme by Boneh and Waters in \cite{BW07} encrypt every entry linear in the size of the plaintext domain.
The first scheme with logarithmic storage size per entry in the domain was proposed by Shi et al.~in \cite{shi_multi-dimensional_2007}.
Their security model (match-revealing) is somewhat weaker than standard searchable encryption.
The construction is based on inner-product predicate encryption which has been made fully secure by Shen et al.~in \cite{shen_predicate_2009}.
All of these schemes have linear search time.

Lu built the range-searchable encryption from \cite{shen_predicate_2009} into an index in \cite{lu_privacy-preserving_2012}, thereby enabling polylogarithmic search time.
However, his encrypted inverted index tree reveals the order of the plaintexts and is hence only as secure as order-preserving encryption.
Wang et al.~\cite{wang_2014} proposed a multi-dimensional extension of Lu~\cite{lu_privacy-preserving_2012}, but it suffers from the same problem of order leakage.
There is no known searchable encryption schemes for ranges - until ours - that has polylogarithmic search time and leaks only the access pattern.

A Lu implementation done by us requires several seconds or minutes for a single range search, even with a security parameter much weaker than ours.
Hence, we not only improve asymptotic search time, but more importantly reduce the constants in order to open application to much larger data sets.

ORAM can in principle be used to hide the access pattern of searchable encryption.
However, Naveed shows that the combination of the two is not straightforward \cite{Nav15}.
Special ORAM techniques, like TWORAM \cite{GarMoh16}, are needed.

\subsection{Encrypted Databases}
\label{sec:reldb}

Encrypted databases, such as CryptDB \cite{popa_cryptdb:_2011}, use property-preserving encryption for efficient search.
Property-preserving encryption has very low deployment and runtime overhead due to the ability to use internal index structures of the database engine in the same way as on plain data.
Order-preserving encryption \cite{agrawal_order_2004,boldyreva_ope_2009,boldyreva_ope_2011,kerschbaum_optimal_2014} allows range queries on the ciphertexts as on the plaintexts.
With order-revealing encryption \cite{chenette2016practical,lewi2016order} a generalization of order-preserving encryption has been published recently.  
However, Naveed et al.~\cite{naveed_inference_2015} initiated the research direction of practical ciphertext-only attacks on property-preserving encryption, in particular order-preserving encryption, which recover the plaintext in many cases with very high probability (close to $100\%$) and further attacks followed~\cite{durak2016else,grubbs2016leakage}.

There have been a number of attempts to build indices for range queries based on deterministic encryption.
Bucketization of ciphertexts \cite{HMT04} groups ciphertexts on the server and filters results at the client.
Wang et al.~\cite{wang_secure_2013} uses distance-revealing encryption in order to build an r-tree.
Li and Omiecinski \cite{LO05} use prefix-preserving encryption in order to build a prefix tree for range searches.
However, all of these approaches are susceptible to the same attacks (and worse) as those by Naveed et al.

Four further approaches for a secure DBMS allowing range query evaluation have been published:
Firstly, Cash et al.~\cite{cash2013highly} introduce a new protocol called OXT that allows evaluation of boolean queries on encrypted data.
Faber et al.~\cite{faber2015rich} extend this data structure to support range queries but either leak additional information on the queried range or the result set contains false positives.  
In~\cite{demertzis_2016}, Demertzis et al. present several approaches for range queries.
The authors also evaluate the security and performance based on the OXT protocol.
The scheme that is most comparable to ours, Logarithmic-URC, is quite equal to the range query approach without false positives from~\cite{faber2015rich} and thus exhibits equal additional leakage.
We provide an experimental comparison in Section~\ref{sec:eval:subsec:searchEnc}.
Secondly, Pappas et al.~\cite{pappas2014blind} evaluate encrypted bloom filters using Secure Multiparty Computation.
However, in order to achieve practical efficiency they propose to split the server into two non-colluding parties. 
Our approach does not require any additional party.
Thirdly, Egorov et al.~\cite{egorov_zerodb_2016} presented ZeroDB. 
A database system that enables a client to perform equality and range searches with the help of \BPtrees.
ZeroDB is an interactive protocol requiring many rounds and thus is not usable for network sensitive cloud computing.
Fourthly, Arx was presented by Poddar et al.~\cite{poddar_arx_2016}.
The authors propose a binary tree that uses garbled circuits at any node to evaluate the traversal direction in a protected manner.
Any traversal destroys the visited garbled circuits wherefore the client has to provide new circuits in an additional round or with the next query.
This interactive reparation step seriously reduces the usefulness in a highly concurrent cloud scenario, because any query involving intersection ranges requires sequential processing.

\subsection{TEE-Based Applications}
\label{sec:reltee}

Trusted Database System (TDB) uses a trusted execution environment (TEE) to operate the entire database in a hostile environment~\cite{MVS00}.
While TDB encrypts the entire database storage and metadata (\eg tree structures to organize the data), TDB is not concerned with information leakage from the TEE.
Neither does TDB aim at hiding access patterns nor does it consider side channels attacks against the TEE.
Furthermore, since the entire DB operates in the TEE the trusted computing base is very large exposing a very large attack surface.

Haven is an approach to shield application on an untrusted system using SGX~\cite{baumann_shielding_2014}. The goal of Haven is to enable the execution of unmodified applications inside an SGX enclave. This technique could be used to isolated off-the-shelf databases with SGX, however, Haven does not consider information leakages through memory access patterns or interactions with the untrustworthy outside world. Furthermore, this approach limits the size of the database due to limited enclave size.

VC3 (short for verifiable confidential cloud computing) adapts the MapReduce computing paradigm to SGX~\cite{SCF15}. Mapper and Reducer entities are executed in individual enclaves, this means the data flow between them can leak sensitive information. 
While VC3 is tailored towards SGX they exclude information leakage from their adversary model.
In contrast, we provide the first work on SGX that specifically focuses on information leakage in the interaction of an enclave with other entities.

Data-oblivious machine learning for SGX was presented in~\cite{OSF+16}. 
Four machine learning algorithm have been adapted by the authors in order to prevent the exploitation of side channels. 
All data and code accesses that are dependent on sensitive data are transferred into data-oblivious accesses by using a library providing a set of data-oblivious primitives.
Access to external data, specifically input data, is addressed by randomizing the data \emph{and} always accessing all data, i.e., their solution has an complexity of $O(n)$, even for tree searches.
\name, in contrast, has a complexity of $O(\log n)$ in the tree size, by following the same approach as~\cite{OSF+16} we could trivially achieve data-oblivious access to the tree since the nodes of our tree are randomized as well.
However, we would lose a main feature of our construction: search time complexity of $O(\log n)$.


\section{Conclusion}
\label{sec:conc}
In this paper, we introduce \name~-- an approach to search for ranges and values over encrypted data using hardware support making it deployable as a secure index in an encrypted database.
We provide a formal security proof explicitly including side channels and an implementation on Intel SGX.
Our solution compares favorably with existing software- and hardware-based approaches.
We require few milliseconds even for complex searches on large data and scale to almost arbitrarily large indices.
We only leak the access pattern and our trusted code protected by SGX hardware is very small exposing a small attack surface.

	\bibliographystyle{IEEEtranS}
	\bibliography{paper}

\end{document}